\tikzstyle{every state}=[minimum size=12pt,inner sep=0pt]
\tikzstyle{randomPath}=[decorate,decoration={amplitude=1pt,segment length=2pt,random steps},very thick]
\tikzstyle{randomPath2}=[decorate,decoration={amplitude=2pt,segment length=6pt,random steps},very thick]
\newcommand{\lacroix}{\tikz[baseline=-.5ex]{\draw[->,>=latex] (0,0) -- (4ex,0); \draw[->,>=latex] (1.8ex,2ex) -- (1.8ex,-2ex);}}
\newcommand{\card}[1]{|#1|}	%{\# #1}
\newcommand{\N}{{\mathbb N}}
\newcommand{\Z}{{\mathbb Z}}
\newcommand{\dz}{\mathfrak d}
\newcommand{\aut}[1]{{\mathcal #1}}
\newcommand{\jjj}{{\aut{J}}}
\newcommand{\dual}[1]{{\mathfrak d}({#1})}
\newcommand{\mot}[1]{{\mathbf {#1}}}
\newcommand{\cc}[1]{\aut{#1}}
\def\eref#1{(\ref{#1})}
\newcommand{\pres}[1]{\langle{#1}\rangle}
\newcommand{\presm}[1]{\pres{{#1}}_{+}}
\def\resp{\hbox{\textit{resp.}} }
\def\Z{\mathbb{Z}}
\def\N{\mathbb{N}}
\newcounter{foo}
\def\dz{{\mathfrak d}}%dualization
\newcommand{\Ac}{\mathcal{A}}
\newcommand{\Bc}{\mathcal{B}}
\newcommand{\resid}[2]{{#2}_{#1}}
\newcommand{\otree}[1][]{\mathfrak{t}{\ifthenelse{\equal{#1}{}}{}{(\aut{#1})}}}
\begin{document}

\mainmatter 

\title{On Torsion-Free Semigroups Generated by~Invertible Reversible Mealy Automata}
\titlerunning{On Torsion-Free Semigroups Generated by Inv. Rev. Mealy Automata}

\author{Thibault Godin \and Ines Klimann
\and Matthieu Picantin\thanks{The authors are partially supported by the French
\emph{Agence Nationale pour la~Recherche},
through the Project~$\mathbf{MealyM}$ ANR-JCJC-12-JS02-012-01.}}
\authorrunning{Th. Godin, I. Klimann, and M. Picantin}
\institute{Univ Paris Diderot, Sorbonne Paris Cit\'e, LIAFA,
    UMR 7089 CNRS,
    Paris France\\
\email{$\{$godin,klimann,picantin$\}$@liafa.univ-paris-diderot.fr}
}

\maketitle
\setcounter{footnote}{0}

\begin{abstract} This paper addresses the torsion problem for a class of automaton semigroups,
defined as semigroups of transformations induced by Mealy automata,
aka letter-by-letter transducers with the same input and output alphabet.
The torsion problem is undecidable for automaton semigroups in general,
but is known to be solvable within the well-studied class of (semi)groups
generated by invertible bounded Mealy automata.
We focus on the somehow antipodal class of invertible reversible Mealy automata
and prove that for a wide subclass
the generated semigroup is torsion-free.

\keywords{automaton semigroup,
reversible Mealy automaton,
labeled orbit tree, 
torsion-free semigroup
}
\end{abstract}

\section{Introduction}

In this paper we address the torsion problem for a class of automaton semigroups.

\medbreak
In a (semi)group, a \emph{torsion}---or~\emph{periodic}---element is an element of finite order,
that is an element generating a finite monogenic sub(semi)group.
In particular, a (semi)group is \emph{torsion-free} (\resp \emph{torsion})
if its only torsion element is its possible identity element
(\resp if all its elements are torsion elements).
Like most of the major group or semigroup theoretical decision problems,
the word, torsion and finiteness problems are undecidable in general \cite{BBN59}.

\medbreak

Automaton (semi)groups, that is (semi)groups generated by Mealy automata,
were formally introduced a half century ago
(for details, see~\cite{clas32} and references therein). 
Two decades later, important results started revealing their full potential.
In particular, contributing to the so-called Burnside problem, the articles~\cite{aleshin,grigorchuk1}
construct particularly simple Mealy automata generating infinite finitely generated torsion groups,
and, answering the so-called Milnor problem, the articles~\cite{brs,grigorchukMilnor}
describe Mealy automata generating the first examples of (semi)groups with intermediate growth.
Since these pioneering works, a substantial theory continues to develop using various methods,
ranging from finite automata theory to geometric group theory
and never ceases to show that automaton (semi)groups possess multiple interesting
and sometimes unusual features.

\medbreak

For automaton (semi)groups, the word problem is solvable using standard minimization techniques~\cite{bs,eil,KMP12}.
The torsion problem and the finiteness problem for automaton semigroups
have been proven to be undecidable~\cite{gillibert:finiteness14}
but remain open for automaton groups.
However there exist various criteria for recognizing
whether such a (semi)group or one of its element has finite order,
see for instance~\cite{AKLMP12,anto,antoberk,clas32,sidkiconjugacy,cain,Kli13,KPS14:3-state,mal,min,russ,sidki,sst}.
In particular, there are many partial methods to find elements of infinite order in such (semi)groups.
Their efficiency may vary significantly.
By contrast, the class of so-called invertible bounded Mealy automata, which has received considerable attention,
admits an effective solution to both conjugacy and order problems~\cite{BKNamenability,sidkiconjugacy,sidki}.
This class happens to correspond to some tight restriction on the underlying automata:
the non-trivial cycles are disjoint and none can be reached from another.
%none can reach another

\medbreak

Here we tackle the torsion problem, focusing on a very different class of Mealy automata,
namely reversible Mealy automata, in which each connected component turns out to be strongly connected. 
This class was known as the class for which most of the existing partial methods do not work or perform poorly. 
We prove that for a wide subclass of invertible reversible Mealy automata---roughly the non-bireversible
ones---the generated semigroup is torsion-free.
It is worth mentioning that the class of bounded 
Mealy automata and the class of reversible Mealy automata are somehow at the opposite ends of the spectrum.

The proof of torsion-freeness relies on deep structural properties of the so-called \emph{labeled orbit tree}
which happens to capture the behavior of the (strongly) connected components during the exponentiation
of a reversible Mealy automaton, and it gives hopefully
a new insight even in the still mysterious subclass of bireversible Mealy automata
(see~\cite{bs,KPS14:3-state,nek} and the references therein).

\medbreak

The paper is organized as follows. In Section~\ref{sec-mealy}, we set up notation,
provide well-known definitions and facts concerning Mealy automata and automaton semigroups.
Some results concerning connected components of reversible Mealy automata are given in Section~\ref{sec-cc}.
In Section~\ref{sec-otree} we introduce a crucial construction, namely the labeled orbit tree of a Mealy automaton,
and define the notion of a self-liftable path, especially relevant for investigating torsion-freeness.
Finally, Section~\ref{sec-main} contains the proof of our main result.

%-------------------------------------------------------------------------------------
%-------------------------------------------------------------------------------------
\section{Mealy Automata}\label{sec-mealy}
We first recall the formal definition of an automaton.
A {\em (finite, deterministic, and complete) automaton} is a
triple \(\bigl( Q,\Sigma,\delta = (\delta_i\colon Q\rightarrow Q )_{i\in \Sigma} \bigr)\),
where the \emph{stateset}~$Q$
and the \emph{alphabet}~$\Sigma$ are non-empty finite sets, and
where the~\(\delta_i\)
are functions.

\smallskip

A \emph{Mealy automaton} is a quadruple
\(\bigl( Q, \Sigma, \delta = (\delta_i\colon Q\rightarrow Q )_{i\in \Sigma},
\rho = (\rho_x\colon \Sigma\rightarrow \Sigma  )_{x\in Q} \bigr)\),
such that both~\((Q,\Sigma,\delta)\) and~\((\Sigma,Q,\rho)\) are
automata.
In other terms, a Mealy automaton is a complete, deterministic,
letter-to-letter transducer with the same input and output alphabet.

The graphical representation of a Mealy automaton is
standard, see Figures~\ref{fig-lamplighter} and~\ref{fig-jir36}.

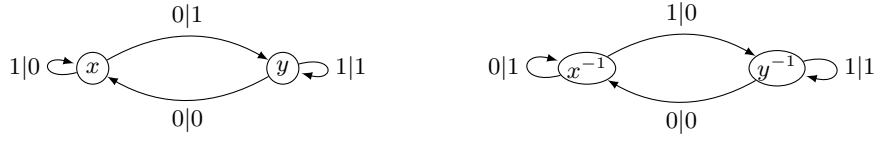
\begin{figure}[ht]%
\centering
\begin{tikzpicture}[->,>=latex,node distance=25mm]
	\node[state] (x) {$x$};
	\node[state] (y) [right of=x] {$y$};
	\path
		(x)	edge[loop left]		node[left]{\(1|0\)}	(x)
		(x)	edge[bend left]		node[above]{\(0|1\)}			(y)
		(y)	edge[loop right] 	node[right]{\(1|1\)}	(y)
		(y)	edge[bend left]		node[below]{\(0|0\)}	(x);
	\begin{scope}[xshift=65mm]
	\node[state,ellipse] (xx) {$x^{-1}$};
	\node[state,ellipse] (yy) [right of=xx] {$y^{-1}$};
	\path
		(xx)	edge[loop left]		node[left]{\(0|1\)}	(xx)
		(xx)	edge[bend left]		node[above]{\(1|0\)}			(yy)
		(yy)	edge[loop right] 	node[right]{\(1|1\)}	(yy)
		(yy)	edge[bend left]		node[below]{\(0|0\)}	(xx);
	\end{scope}
\end{tikzpicture}
\caption{An invertible reversible non-bireversible Mealy automaton~$\aut{L}$ (left) and its inverse~$\aut{L}^{-1}$ (right),
both generating the lamplighter group~$\Z_2\wr\Z$ (see~\cite{gns}).}%
\label{fig-lamplighter}
\end{figure}

\begin{figure}[ht]%
\centering
\begin{tikzpicture}[->,>=latex,node distance=25mm]
		\node[state] (b) {};%$b$};
		\node[state] (a) [right of=b] {};%$a$};
		\node[state] (f) [right of=a] {};%$f$};
		\node[state,node distance=20mm] (c) [below of=b] {};%$c$};
		\node[state] (d) [right of=c] {};%$d$};
		\node[state] (e) [right of=d] {};%$e$};
	\path
		(a)	edge	node[below]{\(1|3\)}	(b)
		(a)	edge[bend right,out=340,in=200]	node[left]{\(2|2\)}	(d)
		(a)	edge	node[below]{\(3|1\)}	(f)
		(b)	edge	node[below, very  near start]{\(3|1\)}	(d)
		(b)	edge[bend left,out=50,in=130]	node{\(\begin{array}{c}1|3\\ 2|2\end{array}\)}	(f)
		(c)	edge	node[above, very near start]{\(1|3\)}	(a)
		(c)	edge	node[left]{\(3|1\)}	(b)
		(c)	edge[bend right, out=335,in=205]	node{\(\begin{array}{c}2|2\\~\end{array}\)}	(e)
		(d)	edge[bend right, out=340,in=200]	node[right]{\(2|2\)}	(a)
		(d)	edge	node[above]{\(3|1\)}	(c)
		(d)	edge	node[above ]{\(1|3\)}	(e)
		(e)	edge	node[above, very near start]{\(3|1\)}	(a)
		(e)	edge[bend left,out=50,in=130]	node{\(\begin{array}{c}1|3\\ 2|2\end{array}\)}	(c)
		(f)	edge[bend right,out=335,in=205]	node{\(\begin{array}{c}\\2|1\end{array}\)}	(b)
		(f)	edge	node[below, very  near start]{\(1|3\)}	(d)
		(f)	edge	node[right]{\(3|2\)}	(e);
\end{tikzpicture}
\caption{A 3-letter 6-state inv. reversible non-bireversible Mealy automaton~$\jjj$. } % 
%m:=MealyMachine([[2,4,6],[6,6,4],[1,5,2],[5,1,3],[3,3,1],[4,2,5]],[[3,2,1],[3,2,1],[3,2,1],[3,2,1],[3,2,1],[3,1,2]]);
\label{fig-jir36}
\end{figure}
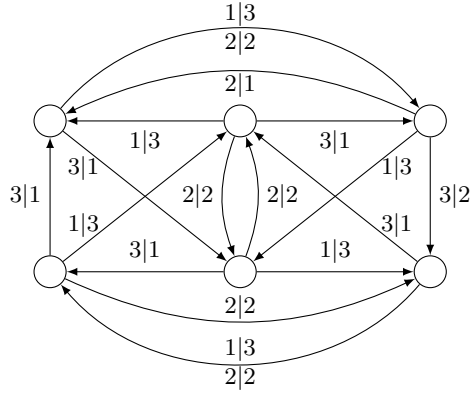

In a Mealy automaton $\aut{A}=(Q,\Sigma, \delta, \rho)$, the sets $Q$ and
$\Sigma$ play dual roles. So we may consider the \emph{dual (Mealy)
  automaton} defined by
\(\dz(\aut{A}) = (\Sigma,Q, \rho, \delta)\).
Alternatively, we can define the dual Mealy automaton via the set of
its transitions:
\begin{equation*}
x \xrightarrow{i\mid j} y \ \in \aut{A} \quad \iff \quad i \xrightarrow{x\mid y} j \ \in
\dz(\aut{A}) \:. 
\end{equation*}

\begin{definition}
A Mealy automaton~$(Q,\Sigma,\delta,\rho)$ is said to be \emph{invertible}
if the functions~$(\rho_x)_{x\in Q}$ are permutations of~$\Sigma$
and~\emph{reversible} if the functions~$(\delta_i)_{i\in\Sigma}$ are permutations of~$Q$.
\end{definition}

Consider a Mealy automaton $\aut{A}=(Q,\Sigma,\delta,\rho)$.
Let $Q^{-1}=\{x^{-1}, x \in Q\}$ be a disjoint copy
of $Q$. The \emph{inverse} \(\aut{A}^{-1}\) of $\aut{A}$ is 
defined by the set of its transitions:
\begin{equation*}
x \xrightarrow{i\mid j} y \ \in \aut{A} \quad \iff \quad x^{-1}
\xrightarrow{j\mid i} y^{-1} \ \in \aut{A}^{-1} \:. 
\end{equation*}

If \(\aut{A}\) is invertible, then its inverse~\(\aut{A}^{-1}\)
 is a Mealy automaton, see for instance Figure~\ref{fig-lamplighter}.

\begin{definition}
A Mealy automaton is \emph{bireversible} if it is invertible, reversible and its inverse is reversible.
\end{definition}

The terms "invertible", "reversible", and "bireversible" are standard since~\cite{MNS}.

Figure~\ref{fig-forbid} gives characterizations of invertibility and reversibility
in terms of forbidden configurations in a Mealy automaton.

\medbreak

Here we define a new class:

\begin{definition}
A Mealy automaton is \emph{coreversible} whenever Configuration~{\bf\small(c)} in Figure~\ref{fig-forbid} does not occur.
This means that each output letter induces a permutation on the stateset.
\end{definition}

The bireversible Mealy automata are those which are simultaneously invertible, reversible, and coreversible.
We emphasize that an invertible reversible Mealy automaton is bireversible if
and only if it is coreversible.

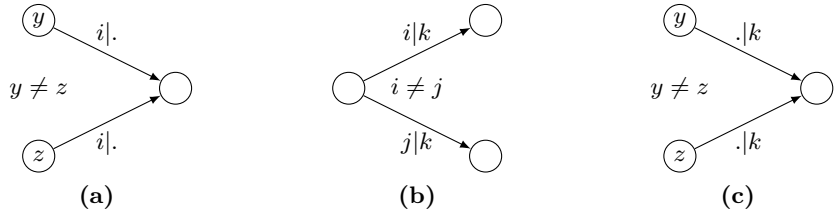
\begin{figure}[ht]%
\centering
\subfloat[]{\label{Config-Rev}
	\begin{tikzpicture}[->,>=latex,node distance=9mm]
	\node[state] (1) {};%{\(x\)};
	\node (0) [left of=1] {};
	\node[state] (2) [left of=0, above of=0] {\(y\)};
	\node[state] (3) [left of=0, below of=0] {\(z\)};
	\node (4) [left of=0] {\(y\neq z\)};
	\path 
      (2) edge node[above]{\(i|.\)} (1)
      (3) edge node[below]{\(i|.\)} (1);
	\end{tikzpicture}
}%
\quad\qquad\qquad
\subfloat[]{\label{Config-Inv}
	\begin{tikzpicture}[->,>=latex,node distance=9mm]
	\node[state] (1) {};%{\(x\)};
	\node (0) [right of=1] {};
	\node[state] (2) [right of=0, above of=0] {};%{\(y\)};
	\node[state] (3) [right of=0, below of=0] {};%{\(z\)};
	\node (4) [right of=1] {\(i\neq j\)};
	\path 
      (1) edge node[above]{\(i|k\)} (2)
      (1) edge node[below]{\(j|k\)} (3);
	\end{tikzpicture}
}
\quad\qquad\qquad
\subfloat[]{\label{Config-Cor}
	\begin{tikzpicture}[->,>=latex,node distance=9mm]
	\node[state] (1) {};%{\(x\)};
	\node (0) [left of=1] {};
	\node[state] (2) [left of=0, above of=0] {\(y\)};
	\node[state] (3) [left of=0, below of=0] {\(z\)};
	\node (4) [left of=0] {\(y\neq z\)};
	\path 
      (2) edge node[above]{\(.|k\)} (1)
      (3) edge node[below]{\(.|k\)} (1);
	\end{tikzpicture}
}
\caption{Configuration~{\bf\small(a)} is forbidden for reversible automata,
Configuration~{\bf\small(b)} for invertible ones,
and Configuration~{\bf\small(c)} for coreversible ones.}
\label{fig-forbid}%
\end{figure}

\medbreak

We view~\(\aut{A}=(Q,\Sigma,\delta,\rho)\) as an automaton with an input and an output tape, thus
defining mappings from input words over~$\Sigma$ to output words
over~$\Sigma$.
Formally, for~\(x\in Q\), the map~$\rho_x\colon\Sigma^* \rightarrow \Sigma^*$,
extending~$\rho_x\colon\Sigma \rightarrow \Sigma$, is defined recursively by:
\begin{equation}\label{eq-rec-def}
\forall i \in \Sigma, \ \forall \mot{s} \in \Sigma^*, \qquad
\rho_x(i\mot{s}) = \rho_x(i)\rho_{\delta_i(x)}(\mot{s}) \:.
\end{equation}

Equation~\eref{eq-rec-def} can be easier to understood if depicted by a \emph{cross-diagram} (see~\cite{AKLMP12}):
\[\begin{array}{ccccc}
		& i             	&            		& \mot{s}       			& \\
x		& \lacroix    	& \delta_i(x) 	& \lacroix				& \delta_{\mot{s}}(\delta_i(x)) \\
		& \rho_x(i)	 	&            		& \rho_{\delta_i(x)}(\mot{s})
\end{array}\]

By convention, the image of the empty word is itself. 
The mapping~\(\rho_x\) for each~$x\in Q$ is length-preserving and prefix-preserving.
We say that~\(\rho_x\) is the \emph{production
function\/} associated with~\((\aut{A},x)\).
For~$\mot{x}=x_1\cdots x_n \in Q^n$ with~$n>0$, set
\(\rho_\mot{x}\colon\Sigma^* \rightarrow \Sigma^*, \rho_\mot{x} = \rho_{x_n}
\circ \cdots \circ \rho_{x_1} \:\).
Denote dually by~\(\delta_i\colon Q^*\rightarrow Q^*,
i\in \Sigma\), the production functions associated with
the dual automaton
$\dz(\aut{A})$. For~$\mot{s}=s_1\cdots s_n
\in \Sigma^n$ with~$n>0$, set~\(\delta_\mot{s}\colon Q^* \rightarrow Q^*,
\ \delta_\mot{s} = \delta_{s_n}\circ \cdots \circ \delta_{s_1}\).

The semigroup of mappings from~$\Sigma^*$ to~$\Sigma^*$ generated by
$\{\rho_x, x\in Q\}$ is called the \emph{semigroup generated
  by~$\aut{A}$} and is denoted by~$\presm{\aut{A}}$.
When~\(\aut{A}\) is invertible,
its production functions are
permutations on words of the same length and thus we may consider
the group of mappings from~$\Sigma^*$ to~$\Sigma^*$ generated by
$\{\rho_x, x\in Q\}$. This group is called the \emph{group generated
by~$\aut{A}$} and is denoted by~$\pres{\aut{A}}$.\medskip

It is know from~\cite{AKLMP12} that the possible behaviors
of invertible reversible non-bireversible Mealy automata provide less variety
than those of bireversible automata whenever finiteness is concerned:

\begin{proposition} \emph{\textbf{(\cite[Corollary 22]{AKLMP12})}}
Any invertible reversible non-bireversible Mealy automaton
generates an infinite group.
\end{proposition}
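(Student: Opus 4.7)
The plan is to argue by contrapositive: assume $\pres{\aut{A}}$ is finite and deduce that $\aut{A}$ must be coreversible, hence bireversible. The starting point is a convenient characterization of bireversibility: when $\aut{A}$ is invertible and reversible, it is additionally coreversible iff the map $\tau\colon Q\times\Sigma\to Q\times\Sigma$ sending $(x,i)$ to $(\delta_i(x),\rho_x(i))$ is a bijection. Hence non-bireversibility yields witnesses $x_1\neq x_2$ in $Q$ and $i_1\neq i_2$ in $\Sigma$ with $\delta_{i_1}(x_1)=\delta_{i_2}(x_2)$ and $\rho_{x_1}(i_1)=\rho_{x_2}(i_2)$.

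I would then work with the minimization of $\aut{A}$, since non-bireversibility is preserved under minimization in the invertible reversible setting: if $x_1\sim x_2$ in the Nerode sense, then $\rho_{x_1}=\rho_{x_2}$ on $\Sigma$ would force $i_1=i_2$ via invertibility, contradicting the witness. Next, I would consider the $n$-th power $\aut{A}^n$, with stateset $Q^n$ and a state $\mathbf{x}$ realizing $\rho_{\mathbf{x}}$: it is again invertible and reversible, and the non-injectivity of $\tau$ lifts to that of the corresponding map $\tau_n$ on $Q^n\times\Sigma$ by concatenating two witness-states with any common tail and inputs $i_1,i_2$. Hence every minimized power $\aut{A}^{(n)}$ remains invertible, reversible, and non-bireversible.

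The core of the argument is to show that the sizes $|\aut{A}^{(n)}|$ of the minimized powers are unbounded. If $\pres{\aut{A}}$ had finite order $M$, all these sizes would be at most $M$, so by pigeonhole the sequence $\aut{A}^{(n)}$ would eventually become isomorphism-periodic. From the non-trivial fibers of $\tau_{n+k}$ at a periodic level, one then extracts either an explicit element of infinite order or, dually, an infinite orbit of $\pres{\dz(\aut{A})}$ on $Q^*$, contradicting the assumed finiteness.

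The main obstacle lies precisely in this last step: turning the qualitative non-injectivity of $\tau_n$ into a quantitative growth statement robust under minimization. A clean implementation is to pass to the dual automaton $\dz(\aut{A})$---which is itself invertible, reversible, and non-bireversible---and to analyze how its orbits on $Q^n$ expand with~$n$, a phenomenon that foreshadows the labeled-orbit-tree techniques developed in Sections~\ref{sec-otree} and~\ref{sec-main}.
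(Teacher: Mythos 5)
First, a caveat: the paper does not prove this proposition at all---it is imported verbatim from~\cite[Corollary~22]{AKLMP12}---so your attempt can only be judged on its own merits. Your setup is sound: for an invertible reversible automaton, coreversibility is indeed equivalent to injectivity of the map $(x,i)\mapsto(\delta_i(x),\rho_x(i))$, a collision forces both $x_1\neq x_2$ and $i_1\neq i_2$, and padding the witness with a common tail shows the non-coreversible configuration persists in every power $\aut{A}^n$ (this is essentially Proposition~\ref{prop-product} and Corollary~\ref{cor-product}). But the proof has a genuine gap at exactly the point you flag as ``the main obstacle'': nothing in the proposal actually converts the persistence of this configuration into unboundedness of the minimized powers, i.e.\ into infiniteness of $\pres{\aut{A}}$. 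The sentence ``from the non-trivial fibers at a periodic level one extracts either an explicit element of infinite order or an infinite orbit of $\pres{\dz(\aut{A})}$ on $Q^*$'' names the desired conclusion without supplying any mechanism; that extraction \emph{is} the content of the statement. (In the framework of this paper it is precisely what Proposition~\ref{prop-no-one} delivers---no edge of a 1-self-liftable path can carry label~1---combined with Lemma~\ref{lem-omega} and Propositions~\ref{prop-bounded-cc} and~\ref{prop-order}; in~\cite{AKLMP12} it is the content of their Corollary~22, proved by different means.) A sketch that stops just before this step, and explicitly defers it to the techniques of Sections~\ref{sec-otree} and~\ref{sec-main}, is not a proof.

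There are also two secondary weak points. You assert that every minimized power remains \emph{reversible}, but minimization only obviously preserves invertibility and (by your witness argument) non-coreversibility: $\delta_i$ being a bijection on states does not immediately make the induced map on Nerode classes injective, since $\delta_i(x)\sim\delta_i(y)$ need not force $x\sim y$; this needs a proof or should be avoided, e.g.\ by working with connected components of the powers (reversible, and non-coreversible by Corollary~\ref{cor-product}) together with the criterion of Proposition~\ref{prop-bounded-cc} instead of with minimized automata. Likewise, the pigeonhole step only gives a repeated isomorphism type among the $\aut{A}^{(n)}$, not eventual periodicity, unless you argue that $\aut{A}^{(n+1)}$ is determined (up to isomorphism) by $\aut{A}^{(n)}$ and $\aut{A}$. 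Neither of these is fatal, but both would have to be fixed in any completed version---and the central growth argument would still remain to be written.
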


Note that the ratio of these invertible reversible non-bireversible Mealy automata
tends to supersede the bireversible one, when the size of alphabet and/or stateset increases.

%-------------------------------------------------------------------------------------
%-------------------------------------------------------------------------------------
\section{On the Behavior of Connected Components}\label{sec-cc}

In this section, we gather some properties satisfied by the connected components
of the underlying graph of a reversible Mealy automaton
and we focus on those properties preserved when making products.
We use the following crucial property: any connected component of a reversible Mealy automaton
is strongly connected.
Our main tool, described in the next section, captures the behavior
of the connected components of the successive powers
of a given reversible Mealy automaton, allowing a much finer analysis.

\begin{definition}\label{def-product}
Let $\Ac = (Q,\Sigma, \delta, \rho)$
and $\Bc = (Q',\Sigma, \delta', \rho')$ be two Mealy automata acting on the same alphabet.
Their \emph{product} is the Mealy automaton $\Ac \times \Bc  = (Q \times Q', \Sigma, \gamma, \pi)$ with transition
\[xy\xrightarrow{i| \rho'_y{(\rho_x (i))}} \delta_i(x)\delta'_{\rho_x(i)}(y)\:,\]
which can be seen in terms of cross-diagram as:
\begin{center}
\begin{tikzpicture}[->,>=latex,node distance=1cm,scale=1]
	\node (ii) [] {$i$};
  	\node (xx) [below left of=ii] {$x$};
	\node (jj) [below right of=xx,below] {$\rho_x(i)$};
	\node (yy) [below right of=ii,right] {$\delta_i(x)$};
	\node (s) [below left of=jj] {$y$};
	\node (z) [below right of=s,below] {$\rho'_y(\rho_x(i))$};
	\node (t) [below right of=jj,right] {$ \delta'_{\rho_x(i)}(y)$};
  	\path 	(ii) edge (jj)
			(xx) edge (yy)
			(s) edge (t)
			(jj) edge (z);
\end{tikzpicture}
\end{center}
\end{definition}

Note that the product of two reversible (\resp invertible) Mealy automata
is still a reversible (\resp invertible) Mealy automaton.
Let us consider the coreversibility property.

\begin{lemma}\label{lem-product}
Let $\Ac$ and $\Bc$ be Mealy automata on the same alphabet with $\Ac$ connected and reversible.
Then, for any connected component~$\cc{C}$ of~$\Ac \times \Bc$,
every state of~$\aut{A}$ occurs as a prefix of some state of~$\cc{C}$.
\end{lemma}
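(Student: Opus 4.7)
The plan is to project the connected component $\cc{C}$ onto its first coordinate and show that this projection, a subset of $Q$, must be all of~$Q$. I would fix an arbitrary state $xy \in \cc{C}$; then $x$ already lies in the projection, and it remains to show that every other $x' \in Q$ does too.

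The key structural input is the property recalled just before the lemma: a connected reversible Mealy automaton is automatically strongly connected. Applied to~$\aut{A}$, this yields, for every target $x' \in Q$, some input word $\mot{s} \in \Sigma^*$ with $\delta_\mot{s}(x) = x'$. Then, unfolding Definition~\ref{def-product} by a routine induction on the length of $\mot{s}$, reading $\mot{s}$ from the product state $xy$ in $\Ac \times \Bc$ lands on a state of the form $\delta_\mot{s}(x)\,y'' = x' y''$ for some $y'' \in Q'$. The crucial observation is that the first coordinate of a product state depends only on the first coordinate of the source: the letters that $\Bc$ reads along the way are produced by $\aut{A}$ and never feed back into~$\aut{A}$, so the projection onto the first coordinate commutes with reading $\mot{s}$.

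Since this is a directed path from $xy$ to $x'y''$ in $\Ac \times \Bc$, it is in particular an undirected path, so $x'y''$ lies in the same connected component $\cc{C}$. Hence $x'$ appears as a prefix of a state of $\cc{C}$, and since $x'$ was arbitrary the projection of $\cc{C}$ exhausts $Q$, which is the desired conclusion.

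I do not anticipate any real obstacle: the argument is a direct projection/forward-reachability computation, whose only non-formal ingredient is the cited strong connectivity of~$\aut{A}$. In particular, no hypothesis on~$\Bc$ (neither invertibility nor reversibility) is needed, which matches the statement of the lemma, and undirected connectedness of $\cc{C}$ is used only in the trivial direction (a directed path is already an undirected path).
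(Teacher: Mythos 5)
Your argument is correct and follows essentially the same route as the paper: fix a state $xy\in\cc{C}$, use strong connectedness of the connected reversible automaton~$\Ac$ to find $\mot{s}$ with $\delta_{\mot{s}}(x)=x'$, and observe that reading $\mot{s}$ from $xy$ in $\Ac\times\Bc$ stays in $\cc{C}$ and yields a state with first coordinate $x'$. Your explicit remark that the first coordinate evolves independently of the $\Bc$-component is exactly the step the paper leaves implicit, so there is nothing to add.
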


\begin{proof}
Let $\Ac = (Q,\Sigma, \delta, \rho)$
and let~\(\cc{C}\) be a connected component of~\(\Ac\times \Bc\).
Let~$xx'\in\cc{C}$ and~$y\in Q$. Since \(\aut{A}\) is connected and reversible,
there exists~$\mot{s}\in~\Sigma^*$ satisfying~$y=\delta_{\mot{s}}(x)$,
hence $y$ is a prefix of the state~$\delta_{\mot{s}}(xx')$ in~$\cc{C}$.
\qed\end{proof}

\begin{proposition}\label{prop-product}
Let $\Ac$ and $\Bc$ be reversible Mealy automata on the same alphabet.
If $\Ac$ is connected and non-coreversible,
then every connected component of~$\Ac \times \Bc$ is reversible and non-coreversible.
\end{proposition}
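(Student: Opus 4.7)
The plan is to split the statement into its two halves. Reversibility is immediate: Definition~\ref{def-product} makes clear that a product of reversible Mealy automata is reversible (for each input letter $i$, the map $xy \mapsto \delta_i(x)\delta'_{\rho_x(i)}(y)$ is a bijection of $Q \times Q'$ because $\delta_i$ is a permutation of~$Q$ and, once $\rho_x(i)$ is fixed, $\delta'_{\rho_x(i)}$ is a permutation of~$Q'$). Restricting the transition structure to any connected component preserves reversibility, since reversibility is a local property at each letter.

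For non-coreversibility, I would unpack the hypothesis via Configuration~\ref{Config-Cor}: $\Ac$ being non-coreversible supplies two distinct states $y \neq z$ in~$Q$, a state $w \in Q$, input letters $i,j$, and a common output letter $k$ with transitions $y \xrightarrow{i \mid k} w$ and $z \xrightarrow{j \mid k} w$ in~$\Ac$. Now fix any connected component $\cc{C}$ of $\Ac \times \Bc$; by Lemma~\ref{lem-product} the state $w$ is a prefix of some state $ww' \in \cc{C}$.

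The key observation is that, because $\Bc$ is reversible, the map $\delta'_k$ is a permutation of $Q'$, so there is a unique $y' \in Q'$ with $\delta'_k(y') = w'$. Using the transition rule of Definition~\ref{def-product}, the pair $yy'$ then sends $i$ to $\rho_y(i) = k$ in the first component and evolves to $\delta_i(y)\,\delta'_k(y') = ww'$, producing in the product the transition
\[
yy' \xrightarrow{\,i \mid \rho'_{y'}(k)\,} ww'\,.
\]
Applying the same reasoning to $z$ with the same $k$ forces the second coordinate of the source to be the \emph{same} $y' = (\delta'_k)^{-1}(w')$, yielding
\[
zy' \xrightarrow{\,j \mid \rho'_{y'}(k)\,} ww'\,.
\]
Thus both transitions end at $ww'$ with the identical output letter $\rho'_{y'}(k)$, their sources $yy'$ and $zy'$ are distinct (since $y \neq z$), and both sources lie in $\cc{C}$ because they are adjacent to $ww' \in \cc{C}$. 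This exhibits Configuration~\ref{Config-Cor} inside~$\cc{C}$, proving $\cc{C}$ is non-coreversible.

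The only subtlety worth highlighting is the coincidence of the second coordinates of the two sources: a priori one might expect the $\Bc$-part of the source in each lifting to be free, but the requirement that both liftings land on the same state $ww'$ together with the reversibility of $\Bc$ pins down $y'$ uniquely and, crucially, forces the two liftings to share the same output letter. This is the whole content of the argument; everything else is bookkeeping with the cross-diagram of Definition~\ref{def-product} and an invocation of Lemma~\ref{lem-product} to guarantee that we can start the lifting from a state of $\cc{C}$ whose first coordinate is exactly~$w$.
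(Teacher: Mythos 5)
Your proof is correct and follows essentially the same route as the paper: both arguments use Lemma~\ref{lem-product} to plant the non-coreversible configuration of~$\Ac$ inside the chosen connected component, exploiting that the two $\Ac$-transitions share the same output letter so that the $\Bc$-coordinate evolves identically and the two product transitions end at the same state with the same output. The only cosmetic difference is that the paper anchors Lemma~\ref{lem-product} at a source state of the configuration and pushes the $\Bc$-transition forward (so it never needs reversibility of~$\Bc$ at that step), whereas you anchor at the target state~$w$ and pull back through the permutation~$\delta'_k$, which is where you invoke reversibility of~$\Bc$.
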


\begin{proof}
Let $Q$ be the stateset of~$\Ac$ 
and let~\(\cc{C}\) be a connected component of~\(\Ac\times \Bc\).
As \(\aut{A}\) and \(\aut{B}\) are reversible, so is \(\cc{C}\).

Since $\Ac$ is not coreversible,
there exist two states~$x\neq y \in Q$ leading to the same state~$z$,
when producing the same letter~$j$:

\begin{center}
  \begin{tikzpicture}[->,>=latex,node distance=9mm]
	\node[state] (1) {$z$};%{\(x\)};
	\node (0) [left of=1] {};
	\node[state] (2) [left of=0, above of=0] {\(x\)};
	\node[state] (3) [left of=0, below of=0] {\(y\)};
	\path 
      (2) edge node[above]{\(.|j\)} (1)
      (3) edge node[below]{\(.|j\)} (1);
	\node[node distance=20mm] (4) [right of=1] {that is,};
	\node[node distance=40mm] (5) [right of=4] {\(\begin{array}{ccc}
	&\:\cdot~\\
x	&\lacroix	& z \\
	&j 
\end{array}\)\quad and\quad \(\begin{array}{ccc}
	&\:\cdot~\\
y	&\lacroix	& z. \\
	&j 
\end{array}\)};
\end{tikzpicture}
\end{center}

By Lemma~\ref{lem-product}, \(\cc{C}\) admits a state prefixed with~$x$, say~$xx'$.
Let
\[\begin{array}{ccc}
	& j  \\
x' 	& \lacroix	& z' \\
	& k     
\end{array}\]

be a transition of \(\aut{B}\), then the following configuration occurs in~\(\cc{C}\):
\begin{center}
  \begin{tikzpicture}[->,>=latex,node distance=9mm]
	\node[state,ellipse] (1) {\(zz'\)};%{\(x\)};
	\node (0) [left of=1] {};
	\node[state,ellipse] (2) [left of=0, above of=0] {\(xx'\)};
	\node[state,ellipse] (3) [left of=0, below of=0] {\(yx'\)};
	\path 
      (2) edge node[above]{\(.|k\)} (1)
      (3) edge node[below]{\(.|k\)} (1);
	\node[node distance=40mm] (4) [left of=1] {that is,};
	\node[node distance=40mm] (5) [left of=4] {\(\begin{array}{ccc}
	&\:.~ \\
x	&\lacroix	& z \\
	&j \\
x' 	& \lacroix	& z' \\
	& k     
\end{array}\qquad \text{and}\qquad\begin{array}{ccc}
	&\:.~ \\
y	&\lacroix	& z \\
	&j \\
x' 	& \lacroix	& z' \\
	& k     
\end{array}\:,\)};
\end{tikzpicture}
\end{center}

which means that \(\cc{C}\) cannot be coreversible.
\qed\end{proof}

A convenient and natural operation is to raise a Mealy automaton to some power.
The \emph{\(n\)-th power} of~$\aut{A} = (Q,\Sigma, \delta, \rho)$ is recursively defined.
By convention, $\aut{A}^0$ is the trivial Mealy automaton with only one state, which acts like identity on~$\Sigma$. For~$n>0$,
$\aut{A}^n$ is the Mealy automaton
\begin{equation*}
\aut{A}^n = \bigl( \ Q^n,\Sigma, (\delta_i : Q^n \rightarrow Q^n)_{i\in \Sigma}, (\rho_{\mot{u}} : \Sigma \rightarrow \Sigma)_{\mot{u}\in Q^n} \ \bigr)\ .
\end{equation*}

\begin{corollary}\label{cor-product}
If a Mealy automaton is (invertible) reversible without co\-revers\-ible connected component,
then every connected component of any of its powers is (invertible) reversible and non-coreversible.
\end{corollary}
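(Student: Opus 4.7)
My plan is to derive the corollary in essentially one shot from Proposition~\ref{prop-product}, after a short reduction that isolates, inside the $n$-th power of $\aut{A}$, the right ``first factor'' to which the proposition can be applied. First I would note that $\aut{A}^{n-1}$ is itself reversible: reversibility is preserved under products (as observed right after Definition~\ref{def-product}), so $\aut{A}^k$ is reversible for every $k\ge 0$ by a trivial induction.

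Next I would fix a connected component $\cc{C}$ of $\aut{A}^n=\aut{A}\times\aut{A}^{n-1}$ and perform the key reduction. Since a product transition $xx'\to \delta_i(x)\delta'_{\rho_x(i)}(x')$ projects on its first coordinate to the transition $x\to \delta_i(x)$ of $\aut{A}$, every product-path keeps its first coordinate inside a single connected component $\cc{C}_1$ of $\aut{A}$. Hence $\cc{C}\subseteq \cc{C}_1\times Q^{n-1}$, and in fact $\cc{C}$ is itself a connected component of the smaller product $\cc{C}_1\times\aut{A}^{n-1}$.

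Now $\cc{C}_1$ is connected, reversible (each $\delta_i$, being a global permutation of $Q$, restricts to a permutation of any connected component of $\aut{A}$), and non-coreversible by hypothesis, while $\aut{A}^{n-1}$ is reversible by the preliminary observation. Proposition~\ref{prop-product} therefore applies to $\cc{C}_1\times\aut{A}^{n-1}$ and yields that every connected component of that product---in particular $\cc{C}$---is reversible and non-coreversible. For the parenthesised invertible version, it suffices to note that invertibility is likewise preserved under products and under restriction to a connected component, so the same argument carries it along throughout.

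I expect the sole delicate point to be the reduction identifying $\cc{C}$ with a connected component of $\cc{C}_1\times\aut{A}^{n-1}$, which is what unlocks the connectedness hypothesis of Proposition~\ref{prop-product}; everything else is a bookkeeping-free invocation of that proposition.
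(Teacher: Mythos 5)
Your proposal is correct and matches the argument the paper intends (the corollary is stated without proof as an immediate consequence of Proposition~\ref{prop-product} and the product structure of powers): you write $\aut{A}^n=\aut{A}\times\aut{A}^{n-1}$, observe that a connected component~$\cc{C}$ of~$\aut{A}^n$ is a connected component of~$\cc{C}_1\times\aut{A}^{n-1}$ for a single connected component~$\cc{C}_1$ of~$\aut{A}$, and apply the proposition with first factor~$\cc{C}_1$ (connected, reversible, non-coreversible by hypothesis) and second factor the reversible automaton~$\aut{A}^{n-1}$, carrying invertibility along since it is preserved by products and by restriction to components. The reduction step you flag as delicate is handled correctly, since transitions of the product project onto transitions of~$\aut{A}$ and connected components of a reversible automaton are closed under transitions in both directions.
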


\begin{definition}\label{def-sph-tr}\cite{gns,nek}
The action of a Mealy automaton~$\aut{A}$ is said to be \emph{spherically transitive}
or~\emph{level-transitive} whenever all the powers of~$\dual{\aut{A}}$ are connected.
\end{definition}

%-------------------------------------------------------------------------------------
%-------------------------------------------------------------------------------------
\section{The Labeled Orbit Tree}\label{sec-otree}

There exist strong links between the successive sizes of the connected components
of the powers of a reversible Mealy automaton
and some finiteness properties of the generated semigroup,
as emphasized by the two following results.
Such links can be captured by a suitable tree, playing a fundamental role in the sequel.

\begin{proposition}\label{prop-bounded-cc}
A reversible Mealy automaton generates a finite semigroup
if and only if the sizes of the connected components of its powers are bounded.
\end{proposition}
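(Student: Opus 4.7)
The plan is to prove the two implications separately: the first direction is direct, while the converse follows by passing to the dual Mealy automaton.

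For the direction ``bounded components $\Rightarrow$ finite semigroup'', I would argue as follows. For each $\mot{u}\in Q^n$, let $\cc{C}(\mot{u})$ denote its connected component in $\aut{A}^n$. Being a connected component, $\cc{C}(\mot{u})$ is closed under the $\delta_i$-transitions and therefore inherits from $\aut{A}^n$ a Mealy automaton structure on the alphabet $\Sigma$ with $|\cc{C}(\mot{u})|$ states. Since the recursive evaluation of $\rho_{\mot{u}}$ on any input word visits only states of $\cc{C}(\mot{u})$, the function $\rho_{\mot{u}}$ depends only on the pointed Mealy automaton $(\cc{C}(\mot{u}), \mot{u})$. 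If all connected components have size at most $B$, then up to isomorphism there are only finitely many such pointed Mealy automata (a bound depending only on $|\Sigma|$ and $B$), hence only finitely many distinct $\rho_{\mot{u}}$, so $\presm{\aut{A}}$ is finite.

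For the converse, the plan is to dualise. A routine induction on the length of $\mot{u}$ using Definition~\ref{def-product} shows that the action of $\dual{\aut{A}}^n$ on a state $\mot{s}\in\Sigma^n$ by reading a word $\mot{u}\in Q^*$ coincides with $\rho_{\mot{u}}(\mot{s})$. Hence the forward orbit of $\mot{s}$ in $\dual{\aut{A}}^n$ equals $\{\rho_{\mot{u}}(\mot{s}) : \mot{u}\in Q^*\}$, of cardinality at most $|\presm{\aut{A}}|$, which is finite by hypothesis. The key observation is that the argument of the first direction nowhere used reversibility: it only used that a connected component is closed under transitions, and the same holds for forward orbits. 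Rerunning that argument on $\dual{\aut{A}}$ with ``forward orbit'' in place of ``connected component'' yields $|\presm{\dual{\aut{A}}}|<\infty$. Symmetrically, for any $\mot{u}\in Q^n$ the forward orbit of $\mot{u}$ in $\aut{A}^n$ equals $\{\delta_{\mot{s}}(\mot{u}) : \mot{s}\in\Sigma^*\}$ and so has cardinality at most $|\presm{\dual{\aut{A}}}|$. Since $\aut{A}$ is reversible, so is each power $\aut{A}^n$, so every connected component of $\aut{A}^n$ is strongly connected and coincides with the forward orbit of any of its states, yielding the desired uniform bound.

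The main obstacle I would anticipate is the temptation to attack the converse direction directly, for instance by analysing the Nerode-congruence blocks of the transitive $\delta$-action on each connected component and bounding their size via stabiliser computations. Such a direct approach seems to require much finer control over the structure of the group of permutations of $Q^n$ generated by $\{\delta_i\}_{i\in\Sigma}$, whereas the duality detour above circumvents it entirely.
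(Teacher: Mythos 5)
Your proof is correct. Note that the paper does not actually prove Proposition~\ref{prop-bounded-cc}: it defers to~\cite{KPS14:3-state}, observing only that the proof there does not use invertibility. Your route is essentially the same duality argument as in that reference — connected components of $\aut{A}^n$ are, by reversibility, exactly the forward orbits of states under the transitions, i.e.\ the orbits of $Q^n$ under the action of the dual, and finiteness is transported between $\presm{\aut{A}}$ and $\presm{\dz(\aut{A})}$ — except that you make the finiteness transfer self-contained: your pointed-component counting argument (the production function of a state depends only on the finite pointed sub-automaton induced on its forward orbit, and there are finitely many such pointed automata of bounded size over a fixed alphabet) reproves the known fact that a Mealy automaton generates a finite semigroup if and only if its dual does, rather than quoting it. That makes your write-up a bit longer but independent of external results, which is a reasonable trade-off. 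Two harmless nits: the orbit bounds should be $\card{\presm{\aut{A}}}+1$ and $\card{\presm{\dz(\aut{A})}}+1$, since the empty word contributes the identity map, which need not belong to the semigroup; and in the easy direction the isomorphisms of pointed automata must fix the alphabet $\Sigma$ letterwise (only states may be renamed), as you clearly intend.
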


The latter is proven in~\cite{KPS14:3-state}
within the framework of invertible reversible Mealy automata,
but the invertibility is not invoked in the proof.
%The following extends another result from~\cite{KPS14:3-state}.
We need the following result, also from~\cite{KPS14:3-state}.

\begin{proposition}\label{prop-order}
Let $\Ac=(Q,\Sigma,\delta,\rho)$ be an invertible reversible Mealy automaton. For any~$\mot{u} \in Q^+$, the following are equivalent:
\begin{enumerate}[(i)]
\item the action~$\rho_{\mot{u}}$ induced by~$\mot{u}$ has finite order;				\label{i1}
\item the sizes of the connected components of~$(\mot{u}^n)_{n \in \N}$ are bounded.	\label{i2}
\end{enumerate}
\end{proposition}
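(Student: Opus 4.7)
Both directions rest on the following cascade identity, obtained by iterating the cross-diagram of Definition~\ref{def-product}: for every $\mot{s}\in\Sigma^{*}$ and every $n\ge 1$,
\[
\delta_{\mot{s}}(\mot{u}^{n})\;=\;\delta_{\mot{s}}(\mot{u})\cdot\delta_{\rho_{\mot{u}}(\mot{s})}(\mot{u})\cdot\delta_{\rho_{\mot{u}}^{2}(\mot{s})}(\mot{u})\cdots\delta_{\rho_{\mot{u}}^{n-1}(\mot{s})}(\mot{u}),
\]
each length-$|\mot{u}|$ factor belonging to the (finite) connected component $\cc{C}$ of $\mot{u}$ in $\aut{A}^{|\mot{u}|}$.

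The direction (\ref{i1})$\Rightarrow$(\ref{i2}) follows immediately. If $\rho_{\mot{u}}^{k}=\id$, then the orbit $(\rho_{\mot{u}}^{j}(\mot{s}))_{j\ge 0}$ is $k$-periodic in $j$, hence so is the block sequence $(\delta_{\rho_{\mot{u}}^{j}(\mot{s})}(\mot{u}))_{j\ge 0}$. Thus $\delta_{\mot{s}}(\mot{u}^{n})$ is entirely determined by its first $k$ blocks, each of which lies in $\cc{C}$: the connected component of $\mot{u}^{n}$ in $\aut{A}^{n|\mot{u}|}$ has at most $|\cc{C}|^{k}$ elements, independently of $n$.

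For (\ref{i2})$\Rightarrow$(\ref{i1}), suppose the components of $\mot{u}^{n}$ have size bounded by some $K$. Setting $b_{j}(\mot{s})=\delta_{\rho_{\mot{u}}^{j}(\mot{s})}(\mot{u})$, the cascade identity puts the component of $\mot{u}^{n}$ in bijection with the image of $\mot{s}\mapsto(b_{0}(\mot{s}),\ldots,b_{n-1}(\mot{s}))$, so at most $K$ such tuples occur for each $n$. By a compactness argument, at most $K$ distinct infinite block sequences $(b_{j}(\mot{s}))_{j\ge 0}$ occur as $\mot{s}$ ranges over $\Sigma^{*}$; the shift on sequences, which by invertibility of $\aut{A}$ corresponds to $\mot{s}\mapsto\rho_{\mot{u}}(\mot{s})$, permutes this finite set, so every block sequence is periodic of some common period $p\le K!$. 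This yields $\delta_{\rho_{\mot{u}}^{p}(\mot{s})}(\mot{u}^{n})=\delta_{\mot{s}}(\mot{u}^{n})$ for all $n$ and all $\mot{s}$.

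The main obstacle is to upgrade this ``block-periodicity'' into the equality $\rho_{\mot{u}}^{p}(\mot{s})=\mot{s}$, since \emph{a priori} distinct words of~$\Sigma^{*}$ may share the same block sequence. I would handle this by reapplying the argument to every power $\mot{u}^{m}$: since the components of $(\mot{u}^{m})^{n}=\mot{u}^{mn}$ are still bounded by~$K$, the corresponding block-periodicities---combined with the fact that $\dual{\aut{A}}$ is itself invertible and reversible---should force two distinct starting words $\mot{s}_{1}\ne\mot{s}_{2}$ of the same length to be eventually distinguished by the transduction of some iterate $\mot{u}^{m}$ in the dual. This ``dual separation'' step, which crucially uses both reversibility and invertibility, is the delicate part of the proof; once secured, applying it with $\mot{s}_{1}=\mot{s}$ and $\mot{s}_{2}=\rho_{\mot{u}}^{p}(\mot{s})$ yields $\rho_{\mot{u}}^{p}(\mot{s})=\mot{s}$ for every $\mot{s}\in\Sigma^{*}$, i.e.\ $\rho_{\mot{u}}$ has order dividing~$p$.
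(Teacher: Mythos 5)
Your direction (i)$\Rightarrow$(ii) is correct (just note that ``finite order'' in the semigroup sense gives $\rho_{\mot{u}}^{a}=\rho_{\mot{u}}^{a+k}$ for some $a,k$, whence $\rho_{\mot{u}}^{k}=\mathrm{id}$ by invertibility), and so is the first half of your converse: the cascade identity, the bound~$K$ on the number of length-$n$ block tuples, and the existence of a~$p$ with $\delta_{\rho_{\mot{u}}^{p}(\mot{s})}(\mot{u}^{n})=\delta_{\mot{s}}(\mot{u}^{n})$ for all~$n$ and~$\mot{s}$ are all fine. (For the record, the paper itself does not prove this proposition: it imports it from~\cite{KPS14:3-state}, so there is no in-paper argument to compare yours with.)

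The last step, however, is not merely ``delicate'': the separation principle you invoke is false, and the proposed repair cannot yield anything new. Distinct words $\mot{s}_1\neq\mot{s}_2$ of the same length need not be distinguished by their action on any~$\mot{u}^{m}$, since they may satisfy $\delta_{\mot{s}_1}=\delta_{\mot{s}_2}$ on all of~$Q^{*}$ (the dual semigroup is not free in general); and re-running your argument on~$\mot{u}^{m}$ is vacuous, because the $\mot{u}^{m}$-blocks of~$\mot{s}$ are exactly the concatenations $b_{jm}(\mot{s})\cdots b_{jm+m-1}(\mot{s})$ of consecutive $\mot{u}$-blocks, so equal $\mot{u}$-block sequences force equal $\mot{u}^{m}$-block sequences for every~$m$. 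Concretely, take $Q=\{a,b\}$, $\Sigma=\{0,1\}$ with transitions $a\xrightarrow{0\mid 1}b$, $a\xrightarrow{1\mid 0}b$, $b\xrightarrow{0\mid 0}a$, $b\xrightarrow{1\mid 1}a$: this automaton is invertible and reversible, $\delta_{0}=\delta_{1}$ is the transposition of~$a$ and~$b$, so $\delta_{\mot{s}}$ depends only on~$|\mot{s}|$; for $\mot{u}=a$ every block sequence is constant and your~$p$ equals~$1$, yet $\rho_{a}$ flips the letters in odd positions, so $\rho_{\mot{u}}^{p}(\mot{s})\neq\mot{s}$ (the true order is~$2$, and already $\mot{s}_1=0$, $\mot{s}_2=1$ are never separated by any~$\mot{u}^{m}$). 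So block-periodicity genuinely loses information, and the equality $\rho_{\mot{u}}^{p}(\mot{s})=\mot{s}$ for your specific~$p$ can fail; the direction (ii)$\Rightarrow$(i) therefore remains unproven in your proposal. A workable route uses more than output data: since the truncation maps $\cc{C}(\mot{u}^{n+1})\to\cc{C}(\mot{u}^{n})$ are onto, the component sizes are non-decreasing, hence bounded implies eventually constant, and these truncations are eventually bijections; one then tracks how the $\rho_{\mot{u}}$-orbit of a word grows letter by letter (or uses the stabilizer/helix-graph machinery of~\cite{KPS14:3-state,AKLMP12}) to bound the orbit sizes uniformly.
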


%\begin{proof}
%\eqref{i1}\(\Rightarrow\)\eqref{i2}  is a direct consequence of
%Proposition~\ref{prop-bounded-cc}: let \(k\) be the order of
%the action~$\rho_{\mot{u}}$ induced by~\({\mot{u}}\); it means that~\(\mot{u}^k\) acts as the identity,
%and so do all the states of its connected component.
%By Proposition~\ref{prop-bounded-cc}, the connected components of the
%\((\mot{u}^{kn})_n\) have bounded size, which leads to~\eqref{i2}.
%
%\eqref{i2}\(\Rightarrow\)\eqref{i1}: for each \(n\), denote by
%\(\aut{C}_n\) the connected component of~\(\mot{u}^n\). As the sizes
%of these components are bounded, the sequence \((\aut{C}_n)_n\) admits a
%subsequence whose all elements are the same, up to state numbering.
%Within this subsequence,
%there are two elements such that two different words in~\(\mot{u}^*\)
%name the same state, say~\(\mot{u}^p\) and~\(\mot{u}^q\) with~$p\neq q$, which means
%that~\(\rho_{\mot{u}^p}=\rho_{\mot{u}^q}\) holds, and \(\rho_{\mot{u}}\)
%has finite order.
%\qed\end{proof}

A direct consequence of Proposition~\ref{prop-bounded-cc} %prop-order}
provides a simple yet interesting result concerning torsion-freeness.

\begin{corollary}\label{cor-sph-tr}
Let $\Ac$ be a reversible Mealy automaton. Whenever the action of~$\dual{\aut{A}}$ is spherically transitive,
the semigroup~$\presm{\aut{A}}$ is torsion-free.
\end{corollary}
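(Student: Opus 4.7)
The plan is to reduce the statement to Proposition~\ref{prop-order} by showing that spherical transitivity forces the sizes of the connected components of $(\mot{u}^n)_n$ to be unbounded, for every $\mot{u}\in Q^+$. This is exactly the failure of condition (ii) of Proposition~\ref{prop-order} and hence, via its equivalence, the failure of torsion for the corresponding element of $\presm{\aut{A}}$.

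First I would unfold the hypothesis via Definition~\ref{def-sph-tr}: the action of $\dual{\aut{A}}$ being spherically transitive amounts to every power $\aut{A}^n$ being connected, so $\aut{A}^n$ is its own unique connected component of size $|Q|^n$. Next I would fix an arbitrary $\mot{u}\in Q^+$ of length $\ell$ and view each $\mot{u}^n$ as a state of $\aut{A}^{n\ell}$. By the hypothesis, the connected component of $\mot{u}^n$ in $\aut{A}^{n\ell}$ has size exactly $|Q|^{n\ell}$. In the nontrivial case $|Q|\geq 2$ (the degenerate case $|Q|=1$ makes $\presm{\aut{A}}$ cyclic and must be checked directly), this sequence grows without bound in $n$. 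Invoking the equivalence of Proposition~\ref{prop-order} then yields that $\rho_\mot{u}$ has infinite order, and since $\mot{u}$ was arbitrary no non-identity element of $\presm{\aut{A}}$ is torsion.

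The main obstacle is a scope mismatch: Proposition~\ref{prop-order} is stated for \emph{invertible} reversible automata, while the corollary concerns reversible automata in general. To close this gap I would appeal directly to Proposition~\ref{prop-bounded-cc}, which assumes only reversibility, applied to the reversible sub-Mealy automaton of $\aut{A}^\ell$ consisting of the $\delta$-orbit of $\mot{u}$; under spherical transitivity this sub-automaton is all of $\aut{A}^\ell$, and its powers have connected components of unbounded size, so Proposition~\ref{prop-bounded-cc} forbids the generated semigroup from being finite. A refinement of the same component-counting argument, restricted to the iterates $\mot{u}^n$ rather than all states of $\aut{A}^\ell$, then yields infiniteness of the cyclic sub-semigroup $\langle\rho_\mot{u}\rangle_+$ itself, delivering the torsion-free conclusion without invoking invertibility.
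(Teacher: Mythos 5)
Your route through Proposition~\ref{prop-order} would indeed give a clean proof if~$\Ac$ were invertible: under the hypothesis every power of~$\Ac$ is connected, so the component of~$\mot{u}^n$ has size~$\card{Q}^{n\ell}$, unbounded for~$\card{Q}\geq 2$, and the equivalence of Proposition~\ref{prop-order} yields infinite order. But the corollary is stated for arbitrary reversible automata, and your repair of this scope mismatch is where the genuine gap lies. Proposition~\ref{prop-bounded-cc} applied to~$\Ac$ (or to~$\Ac^\ell$) only says that the \emph{full} semigroup is infinite; it gives no information about the order of the single element~$\rho_{\mot{u}}$, since a monogenic subsemigroup of an infinite semigroup can perfectly well be finite. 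The ``refinement of the same component-counting argument, restricted to the iterates~$\mot{u}^n$'' that you invoke is precisely the statement of Proposition~\ref{prop-order} with the invertibility hypothesis deleted; that element-wise statement is proved nowhere in the paper for merely reversible automata, and you give no argument for it, so the proof is incomplete exactly at its decisive step.

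The paper closes this gap by a different, global use of spherical transitivity rather than by restricting the counting to the~$\mot{u}^n$. Suppose~$\rho_{\mot{u}^p}=\rho_{\mot{u}^q}$ with~$p<q$ and set~$\ell=|\mot{u}|$. Equivalence of words is preserved by every~$\delta_{\mot{s}}$: if~$\rho_{\mot{v}}=\rho_{\mot{w}}$ then~$\rho_{\delta_{\mot{s}}(\mot{v})}=\rho_{\delta_{\mot{s}}(\mot{w})}$, with no invertibility needed. Since~$\Ac^{q\ell}$ is connected, hence strongly connected by reversibility, every state of~$\Ac^{q\ell}$ has the form~$\delta_{\mot{s}}(\mot{u}^{q})$ and is therefore equivalent to the state~$\delta_{\mot{s}}(\mot{u}^{p})$ of~$\Ac^{p\ell}$; consequently every element of~$\presm{\Ac}$ is represented by a word of length less than~$q\ell$, so the semigroup is finite, contradicting the infiniteness obtained from Proposition~\ref{prop-bounded-cc}. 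This (or an equivalent argument) is what you still need to supply. Incidentally, your parenthetical case~$\card{Q}=1$ cannot simply ``be checked directly'': a one-state automaton whose output function is a transposition satisfies the hypothesis yet generates a two-element semigroup with a non-identity torsion element, so the nondegeneracy assumption~$\card{Q}\geq 2$ is genuinely needed---a caveat your proof shares with the paper's, whose infiniteness step also silently uses it.
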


\begin{proof}
Let~\(\aut{A}\) be a Mealy automaton with stateset~$Q$ such that all its powers are connected.
By Proposition~\ref{prop-bounded-cc}, \(\aut{A}\) generates an infinite semigroup.

Assume that there exists~$\mot{u} \in Q^+$ whose action has finite order, say~\(\rho_{\mot{u}^p}=\rho_{\mot{u}^q}\) with~\(p<q\). 
%Let~$\cc{C} $ be its connected component. 
%All the powers of~$\cc{C} $ are connected. Hence by Proposition~\ref{prop-order},
%$\mot{u}$ induces an action of  infinite order.
By reversibility of~\(\aut{A}\), every state of~$\aut{A}^q$ is equivalent to some state of~$\aut{A}^p$,
hence~\(\aut{A}\) generates a finite semigroup, which is a contradiction.
\qed\end{proof}
 
Corollary~\ref{cor-sph-tr} applies for instance to the Mealy automaton~$\aut{L}$ on Figure~\ref{fig-lamplighter}(left):
the subsemigroup of the lamplighter group generated by~$x$ and~$y$ is torsion-free.

\bigbreak
We are now ready to introduce our main tool.
\begin{definition}
Let~$\Ac$ be a reversible Mealy automaton with stateset~$Q$. 
Rooted in~$\Ac^0$, the \emph{labeled orbit tree}~$\otree[\Ac]$ is constructed
as the graph of the (strongly) connected components of the powers of~$\Ac$,
with an edge between two nodes~$\cc{C}, \cc{D} $
whenever there is $\mot{u} \in \cc{C}$ with~$\mot{u}x \in \cc{D}$ and~$x\in Q$,
such an edge being labeled by the (integer) ratio~$\card{\cc{D}} / \card{\cc{C}}$.
\end{definition}

Such a tree~$\otree[\Ac]$ is more precisely named the labeled orbit tree of the dual~$\dz(\Ac)$
since it can be seen as the tree of the orbits of~$Q^*$ under the action of the group~$\pres{\dz(\Ac)}$
(see~\cite{gawron_ns:conjugation,KPS14:3-state}).

\medbreak
Figure~\ref{fig-otree} displays the labeled orbit tree~$\otree[\jjj]$,
where~$\jjj$ is the Mealy automaton defined in Figure~\ref{fig-jir36}.

%\input{otJIR36bw.tex}
%m:=MealyMachine([[2,4,6],[6,6,4],[1,5,2],[5,1,3],[3,3,1],[4,2,5]],[[3,2,1],[3,2,1],[3,2,1],[3,2,1],[3,2,1],[3,1,2]]);

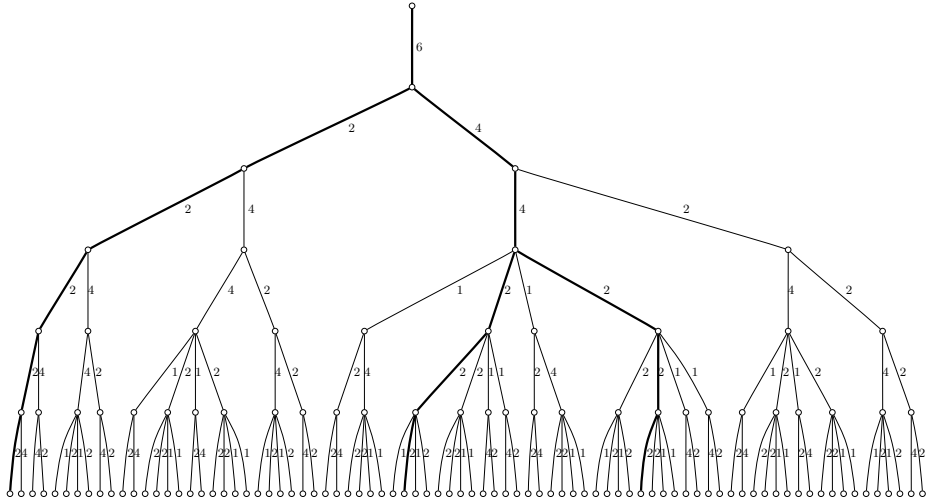
\begin{figure}[ht]
\begin{center}
%\hspace*{-25pt}
\scalebox{.527}{
\begin{tikzpicture}[>=latex',line join=bevel,xscale=.613,inner sep=0pt, minimum size=4pt]
  \node (a6_23) at (289.0bp,4.0bp) [draw,circle] {}; %,ellipse] {};
  \node (a6_22) at (276.0bp,4.0bp) [draw,circle] {}; %,ellipse] {};
  \node (a6_21) at (263.0bp,4.0bp) [draw,circle] {}; %,ellipse] {};
  \node (a6_20) at (250.0bp,4.0bp) [draw,circle] {}; %,ellipse] {};
  \node (a6_27) at (341.0bp,4.0bp) [draw,circle] {}; %,ellipse] {};
  \node (a6_26) at (328.0bp,4.0bp) [draw,circle] {}; %,ellipse] {};
  \node (a6_25) at (315.0bp,4.0bp) [draw,circle] {}; %,ellipse] {};
  \node (a6_24) at (302.0bp,4.0bp) [draw,circle] {}; %,ellipse] {};
  \node (a6_29) at (367.0bp,4.0bp) [draw,circle] {}; %,ellipse] {};
  \node (a6_28) at (354.0bp,4.0bp) [draw,circle] {}; %,ellipse] {};
  \node (a6_38) at (484.0bp,4.0bp) [draw,circle] {}; %,ellipse] {};
  \node (a6_39) at (497.0bp,4.0bp) [draw,circle] {}; %,ellipse] {};
  \node (a6_34) at (432.0bp,4.0bp) [draw,circle] {}; %,ellipse] {};
  \node (a6_35) at (445.0bp,4.0bp) [draw,circle] {}; %,ellipse] {};
  \node (a6_36) at (458.0bp,4.0bp) [draw,circle] {}; %,ellipse] {};
  \node (a6_37) at (471.0bp,4.0bp) [draw,circle] {}; %,ellipse] {};
  \node (a6_30) at (380.0bp,4.0bp) [draw,circle] {}; %,ellipse] {};
  \node (a6_31) at (393.0bp,4.0bp) [draw,circle] {}; %,ellipse] {};
  \node (a6_32) at (406.0bp,4.0bp) [draw,circle] {}; %,ellipse] {};
  \node (a6_33) at (419.0bp,4.0bp) [draw,circle] {}; %,ellipse] {};
  \node (a6_49) at (627.0bp,4.0bp) [draw,circle] {}; %,ellipse] {};
  \node (a6_48) at (614.0bp,4.0bp) [draw,circle] {}; %,ellipse] {};
  \node (a0_1) at (467.0bp,352.0bp) [draw,circle] {}; %,ellipse] {};
  \node (a6_41) at (523.0bp,4.0bp) [draw,circle] {}; %,ellipse] {};
  \node (a6_40) at (510.0bp,4.0bp) [draw,circle] {}; %,ellipse] {};
  \node (a6_43) at (549.0bp,4.0bp) [draw,circle] {}; %,ellipse] {};
  \node (a6_42) at (536.0bp,4.0bp) [draw,circle] {}; %,ellipse] {};
  \node (a6_45) at (575.0bp,4.0bp) [draw,circle] {}; %,ellipse] {};
  \node (a6_44) at (562.0bp,4.0bp) [draw,circle] {}; %,ellipse] {};
  \node (a6_47) at (601.0bp,4.0bp) [draw,circle] {}; %,ellipse] {};
  \node (a6_46) at (588.0bp,4.0bp) [draw,circle] {}; %,ellipse] {};
  \node (a6_4) at (42.0bp,4.0bp) [draw,circle] {}; %,ellipse] {};
  \node (a6_5) at (55.0bp,4.0bp) [draw,circle] {}; %,ellipse] {};
  \node (a6_6) at (68.0bp,4.0bp) [draw,circle] {}; %,ellipse] {};
  \node (a6_7) at (81.0bp,4.0bp) [draw,circle] {}; %,ellipse] {};
  \node (a6_1) at (3.0bp,4.0bp) [draw,circle] {}; %,ellipse] {};
  \node (a6_2) at (16.0bp,4.0bp) [draw,circle] {}; %,ellipse] {};
  \node (a6_3) at (29.0bp,4.0bp) [draw,circle] {}; %,ellipse] {};
  \node (a6_8) at (94.0bp,4.0bp) [draw,circle] {}; %,ellipse] {};
  \node (a6_9) at (107.0bp,4.0bp) [draw,circle] {}; %,ellipse] {};
  \node (a6_58) at (744.0bp,4.0bp) [draw,circle] {}; %,ellipse] {};
  \node (a6_59) at (757.0bp,4.0bp) [draw,circle] {}; %,ellipse] {};
  \node (a6_52) at (666.0bp,4.0bp) [draw,circle] {}; %,ellipse] {};
  \node (a6_53) at (679.0bp,4.0bp) [draw,circle] {}; %,ellipse] {};
  \node (a6_50) at (640.0bp,4.0bp) [draw,circle] {}; %,ellipse] {};
  \node (a6_51) at (653.0bp,4.0bp) [draw,circle] {}; %,ellipse] {};
  \node (a6_56) at (718.0bp,4.0bp) [draw,circle] {}; %,ellipse] {};
  \node (a6_57) at (731.0bp,4.0bp) [draw,circle] {}; %,ellipse] {};
  \node (a6_54) at (692.0bp,4.0bp) [draw,circle] {}; %,ellipse] {};
  \node (a6_55) at (705.0bp,4.0bp) [draw,circle] {}; %,ellipse] {};
  \node (a2_1) at (273.0bp,236.0bp) [draw,circle] {}; %,ellipse] {};
  \node (a2_2) at (586.0bp,236.0bp) [draw,circle] {}; %,ellipse] {};
  \node (a1_1) at (467.0bp,294.0bp) [draw,circle] {}; %,ellipse] {};
  \node (a5_28) at (1043.0bp,62.0bp) [draw,circle] {}; %,ellipse] {};
  \node (a5_22) at (809.0bp,62.0bp) [draw,circle] {}; %,ellipse] {};
  \node (a5_23) at (848.0bp,62.0bp) [draw,circle] {}; %,ellipse] {};
  \node (a5_20) at (751.0bp,62.0bp) [draw,circle] {}; %,ellipse] {};
  \node (a5_21) at (783.0bp,62.0bp) [draw,circle] {}; %,ellipse] {};
  \node (a5_26) at (952.0bp,62.0bp) [draw,circle] {}; %,ellipse] {};
  \node (a5_27) at (1010.0bp,62.0bp) [draw,circle] {}; %,ellipse] {};
  \node (a5_24) at (887.0bp,62.0bp) [draw,circle] {}; %,ellipse] {};
  \node (a5_25) at (913.0bp,62.0bp) [draw,circle] {}; %,ellipse] {};
  \node (a5_3) at (81.0bp,62.0bp) [draw,circle] {}; %,ellipse] {};
  \node (a5_2) at (36.0bp,62.0bp) [draw,circle] {}; %,ellipse] {};
  \node (a5_1) at (16.0bp,62.0bp) [draw,circle] {}; %,ellipse] {};
  \node (a5_7) at (217.0bp,62.0bp) [draw,circle] {}; %,ellipse] {};
  \node (a5_6) at (185.0bp,62.0bp) [draw,circle] {}; %,ellipse] {};
  \node (a5_5) at (146.0bp,62.0bp) [draw,circle] {}; %,ellipse] {};
  \node (a5_4) at (107.0bp,62.0bp) [draw,circle] {}; %,ellipse] {};
  \node (a5_9) at (309.0bp,62.0bp) [draw,circle] {}; %,ellipse] {};
  \node (a5_8) at (250.0bp,62.0bp) [draw,circle] {}; %,ellipse] {};
  \node (a6_67) at (861.0bp,4.0bp) [draw,circle] {}; %,ellipse] {};
  \node (a6_66) at (848.0bp,4.0bp) [draw,circle] {}; %,ellipse] {};
  \node (a6_65) at (835.0bp,4.0bp) [draw,circle] {}; %,ellipse] {};
  \node (a6_64) at (822.0bp,4.0bp) [draw,circle] {}; %,ellipse] {};
  \node (a6_63) at (809.0bp,4.0bp) [draw,circle] {}; %,ellipse] {};
  \node (a6_62) at (796.0bp,4.0bp) [draw,circle] {}; %,ellipse] {};
  \node (a6_61) at (783.0bp,4.0bp) [draw,circle] {}; %,ellipse] {};
  \node (a6_60) at (770.0bp,4.0bp) [draw,circle] {}; %,ellipse] {};
  \node (a3_1) at (93.0bp,178.0bp) [draw,circle] {}; %,ellipse] {};
  \node (a3_3) at (586.0bp,178.0bp) [draw,circle] {}; %,ellipse] {};
  \node (a3_2) at (273.0bp,178.0bp) [draw,circle] {}; %,ellipse] {};
  \node (a3_4) at (901.0bp,178.0bp) [draw,circle] {}; %,ellipse] {};
  \node (a6_69) at (887.0bp,4.0bp) [draw,circle] {}; %,ellipse] {};
  \node (a6_68) at (874.0bp,4.0bp) [draw,circle] {}; %,ellipse] {};
  \node (a5_17) at (608.0bp,62.0bp) [draw,circle] {}; %,ellipse] {};
  \node (a5_16) at (575.0bp,62.0bp) [draw,circle] {}; %,ellipse] {};
  \node (a5_15) at (555.0bp,62.0bp) [draw,circle] {}; %,ellipse] {};
  \node (a5_14) at (523.0bp,62.0bp) [draw,circle] {}; %,ellipse] {};
  \node (a5_13) at (471.0bp,62.0bp) [draw,circle] {}; %,ellipse] {};
  \node (a5_12) at (412.0bp,62.0bp) [draw,circle] {}; %,ellipse] {};
  \node (a5_11) at (380.0bp,62.0bp) [draw,circle] {}; %,ellipse] {};
  \node (a5_10) at (341.0bp,62.0bp) [draw,circle] {}; %,ellipse] {};
  \node (a4_10) at (1010.0bp,120.0bp) [draw,circle] {}; %,ellipse] {};
  \node (a5_19) at (705.0bp,62.0bp) [draw,circle] {}; %,ellipse] {};
  \node (a5_18) at (640.0bp,62.0bp) [draw,circle] {}; %,ellipse] {};
  \node (a6_70) at (900.0bp,4.0bp) [draw,circle] {}; %,ellipse] {};
  \node (a6_71) at (913.0bp,4.0bp) [draw,circle] {}; %,ellipse] {};
  \node (a6_72) at (926.0bp,4.0bp) [draw,circle] {}; %,ellipse] {};
  \node (a6_73) at (939.0bp,4.0bp) [draw,circle] {}; %,ellipse] {};
  \node (a6_74) at (952.0bp,4.0bp) [draw,circle] {}; %,ellipse] {};
  \node (a6_75) at (965.0bp,4.0bp) [draw,circle] {}; %,ellipse] {};
  \node (a6_76) at (978.0bp,4.0bp) [draw,circle] {}; %,ellipse] {};
  \node (a6_77) at (991.0bp,4.0bp) [draw,circle] {}; %,ellipse] {};
  \node (a6_78) at (1004.0bp,4.0bp) [draw,circle] {}; %,ellipse] {};
  \node (a6_79) at (1017.0bp,4.0bp) [draw,circle] {}; %,ellipse] {};
  \node (a6_16) at (198.0bp,4.0bp) [draw,circle] {}; %,ellipse] {};
  \node (a6_17) at (211.0bp,4.0bp) [draw,circle] {}; %,ellipse] {};
  \node (a6_14) at (172.0bp,4.0bp) [draw,circle] {}; %,ellipse] {};
  \node (a6_15) at (185.0bp,4.0bp) [draw,circle] {}; %,ellipse] {};
  \node (a6_12) at (146.0bp,4.0bp) [draw,circle] {}; %,ellipse] {};
  \node (a6_13) at (159.0bp,4.0bp) [draw,circle] {}; %,ellipse] {};
  \node (a6_10) at (120.0bp,4.0bp) [draw,circle] {}; %,ellipse] {};
  \node (a6_11) at (133.0bp,4.0bp) [draw,circle] {}; %,ellipse] {};
  \node (a6_18) at (224.0bp,4.0bp) [draw,circle] {}; %,ellipse] {};
  \node (a6_19) at (237.0bp,4.0bp) [draw,circle] {}; %,ellipse] {};
  \node (a6_82) at (1056.0bp,4.0bp) [draw,circle] {}; %,ellipse] {};
  \node (a6_81) at (1043.0bp,4.0bp) [draw,circle] {}; %,ellipse] {};
  \node (a6_80) at (1030.0bp,4.0bp) [draw,circle] {}; %,ellipse] {};
  \node (a4_8) at (751.0bp,120.0bp) [draw,circle] {}; %,ellipse] {};
  \node (a4_9) at (901.0bp,120.0bp) [draw,circle] {}; %,ellipse] {};
  \node (a4_2) at (93.0bp,120.0bp) [draw,circle] {}; %,ellipse] {};
  \node (a4_3) at (217.0bp,120.0bp) [draw,circle] {}; %,ellipse] {};
  \node (a4_1) at (36.0bp,120.0bp) [draw,circle] {}; %,ellipse] {};
  \node (a4_6) at (555.0bp,120.0bp) [draw,circle] {}; %,ellipse] {};
  \node (a4_7) at (608.0bp,120.0bp) [draw,circle] {}; %,ellipse] {};
  \node (a4_4) at (309.0bp,120.0bp) [draw,circle] {}; %,ellipse] {};
  \node (a4_5) at (412.0bp,120.0bp) [draw,circle] {}; %,ellipse] {};
  \draw [] (a3_4) ..controls (918.77bp,167.87bp) and (992.57bp,129.96bp)  .. (a4_10);
  \draw (971.5bp,149.0bp) node {2};
  \draw [] (a5_20) ..controls (752.31bp,48.773bp) and (755.73bp,16.821bp)  .. (a6_59);
  \draw (757.5bp,33.0bp) node {1};
  \draw [] (a5_18) ..controls (640.0bp,48.773bp) and (640.0bp,16.821bp)  .. (a6_50);
  \draw (643.5bp,33.0bp) node {2};
  \draw [] (a5_20) ..controls (749.52bp,54.312bp) and (747.91bp,46.618bp)  .. (747.0bp,40.0bp) .. controls (745.34bp,27.917bp) and (744.44bp,13.351bp)  .. (a6_58);
  \draw (750.5bp,33.0bp) node {2};
  \draw [] (a5_8) ..controls (256.04bp,55.186bp) and (263.12bp,47.71bp)  .. (267.0bp,40.0bp) .. controls (272.64bp,28.79bp) and (274.96bp,13.734bp)  .. (a6_22);
  \draw (276.5bp,33.0bp) node {1};
  \draw [] (a5_19) ..controls (705.0bp,48.773bp) and (705.0bp,16.821bp)  .. (a6_55);
  \draw (708.5bp,33.0bp) node {1};
  \draw [] (a5_27) ..controls (1008.7bp,48.773bp) and (1005.3bp,16.821bp)  .. (a6_78);
  \draw (1010.5bp,33.0bp) node {2};
  \draw [] (a5_8) ..controls (247.58bp,54.707bp) and (244.72bp,46.859bp)  .. (243.0bp,40.0bp) .. controls (240.0bp,28.04bp) and (238.03bp,13.405bp)  .. (a6_19);
  \draw (246.5bp,33.0bp) node {2};
  \draw [] (a5_6) ..controls (182.58bp,54.707bp) and (179.72bp,46.859bp)  .. (178.0bp,40.0bp) .. controls (175.0bp,28.04bp) and (173.03bp,13.405bp)  .. (a6_14);
  \draw (181.5bp,33.0bp) node {2};
  \draw [] (a5_13) ..controls (471.0bp,48.773bp) and (471.0bp,16.821bp)  .. (a6_37);
  \draw (474.5bp,33.0bp) node {1};
  \draw [ultra thick] (a3_3) ..controls (608.89bp,169.23bp) and (727.77bp,128.88bp)  .. (a4_8);
  \draw (691.5bp,149.0bp) node {2};
  \draw [] (a4_4) ..controls (315.64bp,107.38bp) and (334.47bp,74.423bp)  .. (a5_10);
  \draw (332.5bp,91.0bp) node {2};
  \draw [] (a4_6) ..controls (548.36bp,107.38bp) and (529.53bp,74.423bp)  .. (a5_14);
  \draw (545.5bp,91.0bp) node {2};
  \draw [] (a5_12) ..controls (410.69bp,48.773bp) and (407.27bp,16.821bp)  .. (a6_32);
  \draw (412.5bp,33.0bp) node {2};
  \draw [] (a5_15) ..controls (556.53bp,48.773bp) and (560.52bp,16.821bp)  .. (a6_44);
  \draw (562.5bp,33.0bp) node {2};
  \draw [] (a5_28) ..controls (1043.0bp,48.773bp) and (1043.0bp,16.821bp)  .. (a6_81);
  \draw (1046.5bp,33.0bp) node {4};
  \draw [] (a4_7) ..controls (614.64bp,107.38bp) and (633.47bp,74.423bp)  .. (a5_18);
  \draw (630.5bp,91.0bp) node {4};
  \draw [] (a5_21) ..controls (783.0bp,48.773bp) and (783.0bp,16.821bp)  .. (a6_61);
  \draw (786.5bp,33.0bp) node {4};
  \draw [] (a5_17) ..controls (609.31bp,48.773bp) and (612.73bp,16.821bp)  .. (a6_48);
  \draw (614.5bp,33.0bp) node {4};
  \draw [] (a5_6) ..controls (179.39bp,55.089bp) and (172.78bp,47.539bp)  .. (169.0bp,40.0bp) .. controls (163.34bp,28.701bp) and (160.42bp,13.694bp)  .. (a6_13);
  \draw (172.5bp,33.0bp) node {2};
  \draw [] (a5_3) ..controls (81.0bp,48.773bp) and (81.0bp,16.821bp)  .. (a6_7);
  \draw (84.5bp,33.0bp) node {1};
  \draw [ultra thick] (a5_20) ..controls (746.58bp,55.309bp) and (740.96bp,47.499bp)  .. (738.0bp,40.0bp) .. controls (733.45bp,28.472bp) and (731.74bp,13.594bp)  .. (a6_57);
  \draw (741.5bp,33.0bp) node {2};
  \draw [] (a5_14) ..controls (525.42bp,54.707bp) and (528.28bp,46.859bp)  .. (530.0bp,40.0bp) .. controls (533.0bp,28.04bp) and (534.97bp,13.405bp)  .. (a6_42);
  \draw (535.5bp,33.0bp) node {1};
  \draw [] (a5_5) ..controls (146.0bp,48.773bp) and (146.0bp,16.821bp)  .. (a6_12);
  \draw (149.5bp,33.0bp) node {4};
  \draw [] (a4_9) ..controls (911.66bp,107.3bp) and (942.14bp,73.821bp)  .. (a5_26);
  \draw (935.5bp,91.0bp) node {2};
  \draw [] (a4_3) ..controls (217.0bp,106.77bp) and (217.0bp,74.821bp)  .. (a5_7);
  \draw (220.5bp,91.0bp) node {1};
  \draw [] (a5_21) ..controls (785.42bp,54.707bp) and (788.28bp,46.859bp)  .. (790.0bp,40.0bp) .. controls (793.0bp,28.04bp) and (794.97bp,13.405bp)  .. (a6_62);
  \draw (795.5bp,33.0bp) node {2};
  \draw [] (a3_3) ..controls (590.53bp,165.47bp) and (603.28bp,133.02bp)  .. (a4_7);
  \draw (602.5bp,149.0bp) node {1};
  \draw [] (a3_2) ..controls (261.3bp,165.3bp) and (227.82bp,131.82bp)  .. (a4_3);
  \draw (255.5bp,149.0bp) node {~4};
  \draw [] (a5_7) ..controls (218.53bp,48.773bp) and (222.52bp,16.821bp)  .. (a6_18);
  \draw (225.5bp,33.0bp) node {4};
  \draw [] (a5_26) ..controls (952.0bp,48.773bp) and (952.0bp,16.821bp)  .. (a6_74);
  \draw (955.5bp,33.0bp) node {2};
  \draw [] (a5_24) ..controls (884.58bp,54.707bp) and (881.72bp,46.859bp)  .. (880.0bp,40.0bp) .. controls (877.0bp,28.04bp) and (875.03bp,13.405bp)  .. (a6_68);
  \draw (883.5bp,33.0bp) node {2};
  \draw [] (a5_24) ..controls (889.42bp,54.707bp) and (892.28bp,46.859bp)  .. (894.0bp,40.0bp) .. controls (897.0bp,28.04bp) and (898.97bp,13.405bp)  .. (a6_70);
  \draw (899.5bp,33.0bp) node {1};
  \draw [] (a5_4) ..controls (107.0bp,48.773bp) and (107.0bp,16.821bp)  .. (a6_9);
  \draw (110.5bp,33.0bp) node {4};
  \draw [] (a5_9) ..controls (307.52bp,54.312bp) and (305.91bp,46.618bp)  .. (305.0bp,40.0bp) .. controls (303.34bp,27.917bp) and (302.44bp,13.351bp)  .. (a6_24);
  \draw (308.5bp,33.0bp) node {2};
  \draw [] (a5_17) ..controls (606.52bp,54.312bp) and (604.91bp,46.618bp)  .. (604.0bp,40.0bp) .. controls (602.34bp,27.917bp) and (601.44bp,13.351bp)  .. (a6_47);
  \draw (607.5bp,33.0bp) node {2};
  \draw [] (a5_2) ..controls (37.31bp,48.773bp) and (40.733bp,16.821bp)  .. (a6_4);
  \draw (43.5bp,33.0bp) node {2};
  \draw [] (a5_9) ..controls (312.84bp,54.845bp) and (317.38bp,47.104bp)  .. (320.0bp,40.0bp) .. controls (324.32bp,28.305bp) and (326.77bp,13.521bp)  .. (a6_26);
  \draw (327.5bp,33.0bp) node {2};
  \draw [] (a5_14) ..controls (516.96bp,55.186bp) and (509.88bp,47.71bp)  .. (506.0bp,40.0bp) .. controls (500.36bp,28.79bp) and (498.04bp,13.734bp)  .. (a6_39);
  \draw (509.5bp,33.0bp) node {2};
  \draw [] (a5_22) ..controls (811.42bp,54.707bp) and (814.28bp,46.859bp)  .. (816.0bp,40.0bp) .. controls (819.0bp,28.04bp) and (820.97bp,13.405bp)  .. (a6_64);
  \draw (821.5bp,33.0bp) node {2};
  \draw [] (a5_11) ..controls (377.58bp,54.707bp) and (374.72bp,46.859bp)  .. (373.0bp,40.0bp) .. controls (370.0bp,28.04bp) and (368.03bp,13.405bp)  .. (a6_29);
  \draw (376.5bp,33.0bp) node {2};
  \draw [] (a5_18) ..controls (637.58bp,54.707bp) and (634.72bp,46.859bp)  .. (633.0bp,40.0bp) .. controls (630.0bp,28.04bp) and (628.03bp,13.405bp)  .. (a6_49);
  \draw (636.5bp,33.0bp) node {2};
  \draw [ultra thick] (a4_8) ..controls (751.0bp,106.77bp) and (751.0bp,74.821bp)  .. (a5_20);
  \draw (754.5bp,91.0bp) node {2};
  \draw [] (a5_19) ..controls (707.42bp,54.707bp) and (710.28bp,46.859bp)  .. (712.0bp,40.0bp) .. controls (715.0bp,28.04bp) and (716.97bp,13.405bp)  .. (a6_56);
  \draw (717.5bp,33.0bp) node {2};
  \draw [] (a2_2) ..controls (619.79bp,228.99bp) and (867.07bp,185.03bp)  .. (a3_4);
  \draw (783.5bp,207.0bp) node {2};
  \draw [] (a5_27) ..controls (1006.2bp,54.845bp) and (1001.6bp,47.104bp)  .. (999.0bp,40.0bp) .. controls (994.68bp,28.305bp) and (992.23bp,13.521bp)  .. (a6_77);
  \draw (1002.5bp,33.0bp) node {1};
  \draw [] (a4_10) ..controls (1016.8bp,107.38bp) and (1036.3bp,74.423bp)  .. (a5_28);
  \draw (1033.5bp,91.0bp) node {2};
  \draw [] (a3_2) ..controls (280.47bp,165.38bp) and (301.66bp,132.42bp)  .. (a4_4);
  \draw (299.5bp,149.0bp) node {2};
  \draw [] (a5_26) ..controls (949.58bp,54.707bp) and (946.72bp,46.859bp)  .. (945.0bp,40.0bp) .. controls (942.0bp,28.04bp) and (940.03bp,13.405bp)  .. (a6_73);
  \draw (948.5bp,33.0bp) node {2};
  \draw [] (a5_12) ..controls (408.16bp,54.845bp) and (403.62bp,47.104bp)  .. (401.0bp,40.0bp) .. controls (396.68bp,28.305bp) and (394.23bp,13.521bp)  .. (a6_31);
  \draw (404.5bp,33.0bp) node {2};
  \draw [] (a5_15) ..controls (553.69bp,48.773bp) and (550.27bp,16.821bp)  .. (a6_43);
  \draw (555.5bp,33.0bp) node {4};
  \draw [] (a5_18) ..controls (645.61bp,55.089bp) and (652.22bp,47.539bp)  .. (656.0bp,40.0bp) .. controls (661.66bp,28.701bp) and (664.58bp,13.694bp)  .. (a6_52);
  \draw (664.5bp,33.0bp) node {1};
  \draw [] (a5_19) ..controls (698.96bp,55.186bp) and (691.88bp,47.71bp)  .. (688.0bp,40.0bp) .. controls (682.36bp,28.79bp) and (680.04bp,13.734bp)  .. (a6_53);
  \draw (691.5bp,33.0bp) node {1};
  \draw [] (a4_9) ..controls (889.93bp,107.3bp) and (858.24bp,73.821bp)  .. (a5_23);
  \draw (883.5bp,91.0bp) node {1};
  \draw [] (a5_13) ..controls (464.96bp,55.186bp) and (457.88bp,47.71bp)  .. (454.0bp,40.0bp) .. controls (448.36bp,28.79bp) and (446.04bp,13.734bp)  .. (a6_35);
  \draw (457.5bp,33.0bp) node {1};
  \draw [ultra thick] (a3_1) ..controls (82.449bp,166.63bp) and (46.751bp,131.56bp)  .. (a4_1);
  \draw (75.5bp,149.0bp) node {2};
  \draw [] (a5_13) ..controls (473.42bp,54.707bp) and (476.28bp,46.859bp)  .. (478.0bp,40.0bp) .. controls (481.0bp,28.04bp) and (482.97bp,13.405bp)  .. (a6_38);
  \draw (483.5bp,33.0bp) node {2};
  \draw [ultra thick] (a2_1) ..controls (249.15bp,227.58bp) and (117.43bp,186.6bp)  .. (a3_1);
  \draw (208.5bp,207.0bp) node {2};
  \draw [] (a4_6) ..controls (559.12bp,107.47bp) and (570.71bp,75.02bp)  .. (a5_16);
  \draw (570.5bp,91.0bp) node {1};
  \draw [] (a4_9) ..controls (903.62bp,106.77bp) and (910.47bp,74.821bp)  .. (a5_25);
  \draw (911.5bp,91.0bp) node {1};
  \draw [] (a4_3) ..controls (223.85bp,107.38bp) and (243.27bp,74.423bp)  .. (a5_8);
  \draw (241.5bp,91.0bp) node {2};
  \draw [] (a5_3) ..controls (75.392bp,55.089bp) and (68.779bp,47.539bp)  .. (65.0bp,40.0bp) .. controls (59.336bp,28.701bp) and (56.418bp,13.694bp)  .. (a6_5);
  \draw (68.5bp,33.0bp) node {1};
  \draw [] (a3_1) ..controls (93.0bp,164.77bp) and (93.0bp,132.82bp)  .. (a4_2);
  \draw (96.5bp,149.0bp) node {4};
  \draw [] (a5_11) ..controls (380.0bp,48.773bp) and (380.0bp,16.821bp)  .. (a6_30);
  \draw (383.5bp,33.0bp) node {4};
  \draw [] (a2_1) ..controls (273.0bp,222.77bp) and (273.0bp,190.82bp)  .. (a3_2);
  \draw (276.5bp,207.0bp) node {~~4};
  \draw [] (a5_2) ..controls (34.519bp,54.312bp) and (32.911bp,46.618bp)  .. (32.0bp,40.0bp) .. controls (30.336bp,27.917bp) and (29.441bp,13.351bp)  .. (a6_3);
  \draw (35.5bp,33.0bp) node {4};
  \draw [] (a5_23) ..controls (845.58bp,54.707bp) and (842.72bp,46.859bp)  .. (841.0bp,40.0bp) .. controls (838.0bp,28.04bp) and (836.03bp,13.405bp)  .. (a6_65);
  \draw (844.5bp,33.0bp) node {2};
  \draw [] (a4_3) ..controls (203.86bp,108.63bp) and (159.39bp,73.563bp)  .. (a5_5);
  \draw (193.5bp,91.0bp) node {1};
  \draw [] (a3_3) ..controls (561.77bp,169.2bp) and (435.42bp,128.54bp)  .. (a4_5);
  \draw (522.5bp,149.0bp) node {1};
  \draw [] (a5_22) ..controls (809.0bp,48.773bp) and (809.0bp,16.821bp)  .. (a6_63);
  \draw (812.5bp,33.0bp) node {4};
  \draw [] (a3_4) ..controls (901.0bp,164.77bp) and (901.0bp,132.82bp)  .. (a4_9);
  \draw (904.5bp,149.0bp) node {4};
  \draw [] (a5_4) ..controls (109.42bp,54.707bp) and (112.28bp,46.859bp)  .. (114.0bp,40.0bp) .. controls (117.0bp,28.04bp) and (118.97bp,13.405bp)  .. (a6_10);
  \draw (120.5bp,33.0bp) node {2};
  \draw [] (a4_8) ..controls (741.39bp,107.3bp) and (713.89bp,73.821bp)  .. (a5_19);
  \draw (736.5bp,91.0bp) node {2};
  \draw [] (a4_8) ..controls (757.64bp,107.38bp) and (776.47bp,74.423bp)  .. (a5_21);
  \draw (773.5bp,91.0bp) node {1};
  \draw [] (a5_8) ..controls (250.0bp,48.773bp) and (250.0bp,16.821bp)  .. (a6_20);
  \draw (253.5bp,33.0bp) node {2};
  \draw [ultra thick] (a2_2) ..controls (586.0bp,222.77bp) and (586.0bp,190.82bp)  .. (a3_3);
  \draw (589.5bp,207.0bp) node {~~4};
  \draw [] (a5_14) ..controls (520.58bp,54.707bp) and (517.72bp,46.859bp)  .. (516.0bp,40.0bp) .. controls (513.0bp,28.04bp) and (511.03bp,13.405bp)  .. (a6_40);
  \draw (519.5bp,33.0bp) node {2};
  \draw [] (a5_6) ..controls (187.42bp,54.707bp) and (190.28bp,46.859bp)  .. (192.0bp,40.0bp) .. controls (195.0bp,28.04bp) and (196.97bp,13.405bp)  .. (a6_16);
  \draw (198.5bp,33.0bp) node {1};
  \draw [ultra thick] (a4_1) ..controls (31.881bp,107.47bp) and (20.293bp,75.02bp)  .. (a5_1);
  \draw (32.5bp,91.0bp) node {2};
  \draw [] (a4_2) ..controls (90.38bp,106.77bp) and (83.533bp,74.821bp)  .. (a5_3);
  \draw (92.5bp,91.0bp) node {4};
  \draw [] (a5_7) ..controls (215.69bp,48.773bp) and (212.27bp,16.821bp)  .. (a6_17);
  \draw (218.5bp,33.0bp) node {2};
  \draw [] (a4_5) ..controls (405.36bp,107.38bp) and (386.53bp,74.423bp)  .. (a5_11);
  \draw (403.5bp,91.0bp) node {2};
  \draw [] (a5_16) ..controls (575.0bp,48.773bp) and (575.0bp,16.821bp)  .. (a6_45);
  \draw (578.5bp,33.0bp) node {4};
  \draw [] (a4_7) ..controls (608.0bp,106.77bp) and (608.0bp,74.821bp)  .. (a5_17);
  \draw (611.5bp,91.0bp) node {2};
  \draw [] (a5_10) ..controls (341.0bp,48.773bp) and (341.0bp,16.821bp)  .. (a6_27);
  \draw (344.5bp,33.0bp) node {4};
  \draw [] (a5_18) ..controls (642.42bp,54.707bp) and (645.28bp,46.859bp)  .. (647.0bp,40.0bp) .. controls (650.0bp,28.04bp) and (651.97bp,13.405bp)  .. (a6_51);
  \draw (652.5bp,33.0bp) node {1};
  \draw [] (a4_4) ..controls (309.0bp,106.77bp) and (309.0bp,74.821bp)  .. (a5_9);
  \draw (312.5bp,91.0bp) node {4};
  \draw [] (a4_10) ..controls (1010.0bp,106.77bp) and (1010.0bp,74.821bp)  .. (a5_27);
  \draw (1013.5bp,91.0bp) node {4};
  \draw [] (a5_19) ..controls (702.58bp,54.707bp) and (699.72bp,46.859bp)  .. (698.0bp,40.0bp) .. controls (695.0bp,28.04bp) and (693.03bp,13.405bp)  .. (a6_54);
  \draw (701.5bp,33.0bp) node {2};
  \draw [] (a5_27) ..controls (1011.5bp,48.773bp) and (1015.5bp,16.821bp)  .. (a6_79);
  \draw (1017.5bp,33.0bp) node {1};
  \draw [] (a5_6) ..controls (185.0bp,48.773bp) and (185.0bp,16.821bp)  .. (a6_15);
  \draw (188.5bp,33.0bp) node {1};
  \draw [ultra thick] (a5_13) ..controls (468.58bp,54.707bp) and (465.72bp,46.859bp)  .. (464.0bp,40.0bp) .. controls (461.0bp,28.04bp) and (459.03bp,13.405bp)  .. (a6_36);
  \draw (467.5bp,33.0bp) node {2};
  \draw [] (a4_1) ..controls (36.0bp,106.77bp) and (36.0bp,74.821bp)  .. (a5_2);
  \draw (39.5bp,91.0bp) node {4};
  \draw [] (a4_5) ..controls (412.0bp,106.77bp) and (412.0bp,74.821bp)  .. (a5_12);
  \draw (415.5bp,91.0bp) node {4};
  \draw [] (a5_16) ..controls (577.42bp,54.707bp) and (580.28bp,46.859bp)  .. (582.0bp,40.0bp) .. controls (585.0bp,28.04bp) and (586.97bp,13.405bp)  .. (a6_46);
  \draw (587.5bp,33.0bp) node {2};
  \draw [] (a4_6) ..controls (555.0bp,106.77bp) and (555.0bp,74.821bp)  .. (a5_15);
  \draw (558.5bp,91.0bp) node {1};
  \draw [] (a5_12) ..controls (413.53bp,48.773bp) and (417.52bp,16.821bp)  .. (a6_33);
  \draw (419.5bp,33.0bp) node {1};
  \draw [] (a5_10) ..controls (343.42bp,54.707bp) and (346.28bp,46.859bp)  .. (348.0bp,40.0bp) .. controls (351.0bp,28.04bp) and (352.97bp,13.405bp)  .. (a6_28);
  \draw (354.5bp,33.0bp) node {2};
  \draw [] (a5_28) ..controls (1045.4bp,54.707bp) and (1048.3bp,46.859bp)  .. (1050.0bp,40.0bp) .. controls (1053.0bp,28.04bp) and (1055.0bp,13.405bp)  .. (a6_82);
  \draw (1055.5bp,33.0bp) node {2};
  \draw [] (a5_9) ..controls (304.78bp,54.898bp) and (299.81bp,47.198bp)  .. (297.0bp,40.0bp) .. controls (292.47bp,28.386bp) and (290.14bp,13.557bp)  .. (a6_23);
  \draw (300.5bp,33.0bp) node {1};
  \draw [] (a4_9) ..controls (897.94bp,106.77bp) and (889.96bp,74.821bp)  .. (a5_24);
  \draw (898.5bp,91.0bp) node {2};
  \draw [] (a5_26) ..controls (957.61bp,55.089bp) and (964.22bp,47.539bp)  .. (968.0bp,40.0bp) .. controls (973.66bp,28.701bp) and (976.58bp,13.694bp)  .. (a6_76);
  \draw (976.5bp,33.0bp) node {1};
  \draw [ultra thick] (a1_1) ..controls (441.2bp,285.55bp) and (298.12bp,244.25bp)  .. (a2_1);
  \draw (397.5bp,265.0bp) node {2};
  \draw [] (a4_2) ..controls (96.057bp,106.77bp) and (104.04bp,74.821bp)  .. (a5_4);
  \draw (105.5bp,91.0bp) node {2};
  \draw [] (a5_3) ..controls (78.584bp,54.707bp) and (75.72bp,46.859bp)  .. (74.0bp,40.0bp) .. controls (71.0bp,28.04bp) and (69.025bp,13.405bp)  .. (a6_6);
  \draw (77.5bp,33.0bp) node {2};
  \draw [] (a5_24) ..controls (880.96bp,55.186bp) and (873.88bp,47.71bp)  .. (870.0bp,40.0bp) .. controls (864.36bp,28.79bp) and (862.04bp,13.734bp)  .. (a6_67);
  \draw (873.5bp,33.0bp) node {2};
  \draw [ultra thick] (a0_1) ..controls (467.0bp,338.77bp) and (467.0bp,306.82bp)  .. (a1_1);
  \draw (470.5bp,323.0bp) node {~~6};
  \draw [] (a5_5) ..controls (143.58bp,54.707bp) and (140.72bp,46.859bp)  .. (139.0bp,40.0bp) .. controls (136.0bp,28.04bp) and (134.03bp,13.405bp)  .. (a6_11);
  \draw (142.5bp,33.0bp) node {2};
  \draw [] (a5_23) ..controls (848.0bp,48.773bp) and (848.0bp,16.821bp)  .. (a6_66);
  \draw (851.5bp,33.0bp) node {4};
  \draw [] (a5_25) ..controls (915.42bp,54.707bp) and (918.28bp,46.859bp)  .. (920.0bp,40.0bp) .. controls (923.0bp,28.04bp) and (924.97bp,13.405bp)  .. (a6_72);
  \draw (925.5bp,33.0bp) node {4};
  \draw [] (a4_3) ..controls (210.36bp,107.38bp) and (191.53bp,74.423bp)  .. (a5_6);
  \draw (208.5bp,91.0bp) node {2};
  \draw [ultra thick] (a5_1) ..controls (13.584bp,54.707bp) and (10.72bp,46.859bp)  .. (9.0bp,40.0bp) .. controls (6.0001bp,28.04bp) and (4.0254bp,13.405bp)  .. (a6_1);
  \draw (12.5bp,33.0bp) node {2};
  \draw [ultra thick] (a1_1) ..controls (486.41bp,283.87bp) and (566.97bp,245.96bp)  .. (a2_2);
  \draw (543.5bp,265.0bp) node {4};
  \draw [ultra thick] (a3_3) ..controls (579.57bp,165.38bp) and (561.32bp,132.42bp)  .. (a4_6);
  \draw (577.5bp,149.0bp) node {2};
  \draw [] (a5_24) ..controls (887.0bp,48.773bp) and (887.0bp,16.821bp)  .. (a6_69);
  \draw (890.5bp,33.0bp) node {1};
  \draw [] (a5_12) ..controls (415.79bp,54.825bp) and (420.28bp,47.068bp)  .. (423.0bp,40.0bp) .. controls (427.5bp,28.288bp) and (430.46bp,13.514bp)  .. (a6_34);
  \draw (430.5bp,33.0bp) node {1};
  \draw [ultra thick] (a4_6) ..controls (540.31bp,109.21bp) and (485.19bp,72.457bp)  .. (a5_13);
  \draw (525.5bp,91.0bp) node {2};
  \draw [] (a5_20) ..controls (754.41bp,54.777bp) and (758.47bp,46.985bp)  .. (761.0bp,40.0bp) .. controls (765.28bp,28.204bp) and (768.37bp,13.477bp)  .. (a6_60);
  \draw (768.5bp,33.0bp) node {1};
  \draw [] (a4_8) ..controls (757.88bp,113.96bp) and (768.72bp,105.79bp)  .. (777.0bp,98.0bp) .. controls (789.23bp,86.493bp) and (802.29bp,71.1bp)  .. (a5_22);
  \draw (793.5bp,91.0bp) node {1};
  \draw [] (a5_8) ..controls (252.42bp,54.707bp) and (255.28bp,46.859bp)  .. (257.0bp,40.0bp) .. controls (260.0bp,28.04bp) and (261.97bp,13.405bp)  .. (a6_21);
  \draw (263.5bp,33.0bp) node {1};
  \draw [] (a5_9) ..controls (310.31bp,48.773bp) and (313.73bp,16.821bp)  .. (a6_25);
  \draw (316.5bp,33.0bp) node {1};
  \draw [] (a5_27) ..controls (1013.8bp,54.825bp) and (1018.3bp,47.068bp)  .. (1021.0bp,40.0bp) .. controls (1025.5bp,28.288bp) and (1028.5bp,13.514bp)  .. (a6_80);
  \draw (1028.5bp,33.0bp) node {2};
  \draw [] (a5_25) ..controls (913.0bp,48.773bp) and (913.0bp,16.821bp)  .. (a6_71);
  \draw (916.5bp,33.0bp) node {2};
  \draw [] (a5_14) ..controls (523.0bp,48.773bp) and (523.0bp,16.821bp)  .. (a6_41);
  \draw (526.5bp,33.0bp) node {1};
  \draw [] (a5_1) ..controls (16.0bp,48.773bp) and (16.0bp,16.821bp)  .. (a6_2);
  \draw (19.5bp,33.0bp) node {4};
  \draw [] (a5_26) ..controls (954.42bp,54.707bp) and (957.28bp,46.859bp)  .. (959.0bp,40.0bp) .. controls (962.0bp,28.04bp) and (963.97bp,13.405bp)  .. (a6_75);
  \draw (964.5bp,33.0bp) node {1};
  \draw [] (a5_3) ..controls (83.416bp,54.707bp) and (86.28bp,46.859bp)  .. (88.0bp,40.0bp) .. controls (91.0bp,28.04bp) and (92.975bp,13.405bp)  .. (a6_8);
  \draw (94.5bp,33.0bp) node {2};
\end{tikzpicture}
}
\end{center}
\caption{The labeled orbit tree~$\otree[\jjj]$ (up to level~6) with~$\jjj$ defined on Figure~\ref{fig-jir36}
(the~thickened edges emphasize the 1-self-liftable paths defined below in Def.~\ref{def-self-liftable}).}
\label{fig-otree}
\end{figure}

\medbreak
Since the orbit trees are rooted,
we choose the classical orientation where the root is the higher vertex and the tree grows from top to bottom.
A path is a (possibly infinite) sequence of adjacent edges without backtracking.
The initial vertex of an edge $e$ is denoted by~$\top (e)$ and its terminal vertex by~$\bot(e)$;
by extension, the initial vertex of a non-empty path $\mot{e}$ is denoted by~$\top(\mot{e})$
and its terminal vertex by~$\bot(\mot{e})$ whenever the path is finite.
The label of a (possibly infinite) path is the ordered sequence of labels of its edges.

\begin{definition}
In a tree, a (possibly infinite) path $\mot{e}$ is said to be \emph{initial} if $\top(\mot{e})$ is the root of the tree.
\end{definition}

\begin{definition}
Let~\(\aut{A}\) be a reversible Mealy automaton and let~\(e,f\) be two edges in the orbit tree~\(\otree[A]\).
We say that~\(e\) \emph{is liftable to}~\(f\) if each word of~\(\bot(e)\)
admits some word of~\(\bot(f)\) as a suffix.
\end{definition}

One can notice that this condition is not as strong as it seems: 

\begin{lemma}\label{lem-lift-one}
Let~\(\aut{A}\) be a reversible Mealy automaton and let~\(e,f\) be two edges in the orbit tree~\(\otree[A]\).
If there exists a word of~$\bot(e)$ which admits a word of~$\bot(f)$ as suffix,
then $e$ is liftable to~$f$.
\end{lemma}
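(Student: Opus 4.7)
\textbf{Proof plan for Lemma~\ref{lem-lift-one}.} The plan is to leverage strong connectedness of the components together with a cross-diagram decomposition of the dual action on a concatenation. Suppose $\mathbf{u}\in\bot(e)$ factors as $\mathbf{u}=\mathbf{w}\mathbf{v}$ with $\mathbf{v}\in\bot(f)$. I want to show that any other element $\mathbf{u}'\in\bot(e)$ admits a suffix lying in $\bot(f)$.

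First, since $\aut{A}$ is reversible, each connected component of $\aut{A}^{|\mathbf{u}|}$ is strongly connected (as recalled at the start of Section~\ref{sec-cc}), and by the definition of $\otree[\aut{A}]$ the vertex $\bot(e)$ is exactly such an orbit under the dual action. So there exists an input word $\mathbf{s}\in\Sigma^*$ with $\mathbf{u}'=\delta_\mathbf{s}(\mathbf{u})$.

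Second, I unfold the action of $\delta_\mathbf{s}$ on the concatenation $\mathbf{w}\mathbf{v}$ via a cross-diagram: feeding $\mathbf{s}$ first through the prefix $\mathbf{w}$ yields an output $\mathbf{s}':=\rho_\mathbf{w}(\mathbf{s})$ and moves $\mathbf{w}$ to $\delta_\mathbf{s}(\mathbf{w})$; then $\mathbf{s}'$ enters $\mathbf{v}$ and moves it to $\delta_{\mathbf{s}'}(\mathbf{v})$. In other words,
\begin{equation*}
\mathbf{u}'=\delta_\mathbf{s}(\mathbf{w}\mathbf{v})=\delta_\mathbf{s}(\mathbf{w})\cdot\delta_{\rho_\mathbf{w}(\mathbf{s})}(\mathbf{v})\,,
\end{equation*}
so $\mathbf{u}'$ has $\delta_{\rho_\mathbf{w}(\mathbf{s})}(\mathbf{v})$ as suffix.

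Third, I observe that $\bot(f)$, being a connected component of $\aut{A}^{|\mathbf{v}|}$, is by definition closed under the dual action on $Q^{|\mathbf{v}|}$: if $\mathbf{v}\in\bot(f)$ then $\delta_{\mathbf{t}}(\mathbf{v})\in\bot(f)$ for every $\mathbf{t}\in\Sigma^*$. Applied to $\mathbf{t}=\rho_\mathbf{w}(\mathbf{s})$, this gives $\delta_{\rho_\mathbf{w}(\mathbf{s})}(\mathbf{v})\in\bot(f)$, so the suffix exhibited above does lie in $\bot(f)$. As $\mathbf{u}'$ was arbitrary in $\bot(e)$, this shows $e$ is liftable to $f$.

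There is really no hard obstacle here; the only subtlety is the bookkeeping in the cross-diagram step, which could be slightly tricky if the prefix/suffix roles are swapped. Nothing beyond reversibility of $\aut{A}$, the orbit description of vertices of $\otree[\aut{A}]$, and the recursive formula~\eref{eq-rec-def} is invoked.
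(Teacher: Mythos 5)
Your argument is correct and is essentially the paper's own proof: both use reversibility (strong connectedness of components under the dual action) to write an arbitrary element of $\bot(e)$ as $\delta_{\mot{s}}$ applied to the given factored word, then split the action across the concatenation as $\delta_{\mot{s}}(\mot{w}\mot{v})=\delta_{\mot{s}}(\mot{w})\,\delta_{\rho_{\mot{w}}(\mot{s})}(\mot{v})$ and note the resulting suffix stays in the component $\bot(f)$. Only the notation differs; no further comment is needed.
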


\begin{proof} Assume $\mot{u}\mot{v} \in \bot(e)$ with~$\mot{v} \in \bot(f) $.
By reversibility, for any word~$\mot{w}$ in the connected component~$\bot(e)$,
there exists~$\mot{s} \in \Sigma^*$ satisfying~$\mot{w}= \delta_{\mot{s}}(\mot{u}\mot{v}) $,
which can also be written $\mot{w} = \delta_{\mot{s}}(\mot{u})\delta_{\mot{t}}(\mot{v}) $
with~$\mot{t} = \rho_{\mot{u}}(\mot{s})$.
Hence the suffix~$\delta_{\mot{t}}(\mot{v})$ of~$\mot{w}$
belongs to the connected component~$\bot(f)$ of~$\mot{v}$.
\qed\end{proof}

\begin{definition}\label{def-self-liftable}
Let~\(\aut{A}\) be a reversible Mealy automaton
and let~$\mot{e}$ be a (possibly infinite) initial path in the orbit tree~\(\otree[A]\).
We say that $\mot{e}=e_0e_1\cdots$ is \emph{1-self-liftable}
whenever every edge~$e_{i+1}$ is liftable to its predecessor~$e_i$, for~$i\geq 0$.
\end{definition}

This important notion generalizes that of an~\emph{$e$-liftable path} used in~\cite{KPS14:3-state}
where the liftability is required with respect to a uniquely specified edge~$e$.

Using thickened edges, Figure~\ref{fig-otree} highlights each of the %three
1-self-liftable paths in the orbit tree~$\otree[\jjj]$,
where the Mealy automaton~$\jjj$ is displayed on Figure~\ref{fig-jir36}.

\begin{definition}\label{def-path-of}
Let~\(\aut{A}\) be a reversible Mealy automaton with stateset~$Q$.
The \emph{path of} a word~$\mot{u}\in Q^*\cup Q^\omega$ 
is the unique initial path in~$\otree[A]$ going from the root through the connected
components of the prefixes of~\(\mot{u}\).
\end{definition}

\begin{lemma}\label{lem-omega}
Let~\(\aut{A}\) be a reversible Mealy automaton.
For any state~$x$ of~$\aut{A}$, the path of~$x^{\omega}$ in~$\otree[A]$ is 1-self-liftable.
\end{lemma}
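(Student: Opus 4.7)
The plan is to reduce the statement to a single application of Lemma~\ref{lem-lift-one}.

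First I would unfold the definitions. The path of~$x^\omega$ in~$\otree[A]$ is the initial path $e_0 e_1 e_2 \cdots$ where, for each~$n \geq 0$, the vertex $\cc{C}_n := \top(e_n) = \bot(e_{n-1})$ is the connected component of the prefix~$x^n$ of~$x^\omega$ in~$\aut{A}^n$. Each edge~$e_n$ indeed lives in $\otree[A]$ because of the obvious transition witnessed by $x^n \in \cc{C}_n$ and $x^n \cdot x = x^{n+1} \in \cc{C}_{n+1}$, which matches the edge existence criterion in the definition of the orbit tree.

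To establish 1-self-liftability, I need to check that, for every~$i \geq 0$, the edge~$e_{i+1}$ is liftable to its predecessor~$e_i$, that is, that every word of $\bot(e_{i+1}) = \cc{C}_{i+2}$ admits a word of $\bot(e_i) = \cc{C}_{i+1}$ as a suffix. By Lemma~\ref{lem-lift-one}, it is already sufficient to exhibit one word of $\cc{C}_{i+2}$ having a suffix in $\cc{C}_{i+1}$. The canonical witness is $x^{i+2} \in \cc{C}_{i+2}$, whose length-$(i{+}1)$ suffix is $x^{i+1} \in \cc{C}_{i+1}$. This finishes the verification for all~$i$.

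There is essentially no obstacle here: the whole technical content has been absorbed into Lemma~\ref{lem-lift-one}, which uses reversibility to propagate a single suffix witness across an entire connected component. The lemma is really an observation recording that the trivial self-similar structure of~$x^\omega$ (any suffix of~$x^n$ is again a power of~$x$) is already enough to trigger that propagation.
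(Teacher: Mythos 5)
Your proposal is correct and follows exactly the paper's own argument: the paper's proof is precisely the one-line observation that $x^{n}$ is a suffix of~$x^{n+1}$, so Lemma~\ref{lem-lift-one} applies to each edge of the path of~$x^\omega$. You merely spell out the indexing of the components $\cc{C}_n$ in more detail, which is fine but adds nothing beyond the paper's reasoning.
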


\begin{proof}
By Lemma~\ref{lem-lift-one}, $x^n$ being a suffix of~$x^{n+1}$, such a path is 1-self-liftable.
\qed\end{proof}

Lemma~\ref{lem-omega} guarantees the existence of 1-self-liftable paths in any orbit tree.

%-------------------------------------------------------------------------------------
%-------------------------------------------------------------------------------------
\section{Main Result}\label{sec-main}

Assume that $\Ac$
is an invertible reversible Mealy automaton without bireversible component.
Our aim is to prove that every element of~$\presm{\aut{A}}$ has infinite order. 
We first prove this property for the states of $\aut{A}$, whenever~$\Ac$ is connected,
by looking at some %well-chosen
1-self-liftable paths in~$\otree[A]$ (defined in Section~\ref{sec-otree}).
Then we extend it to arbitrary elements of~$\presm{\aut{A}}$ by using
the properties of products of Mealy automata (established in Section~\ref{sec-cc}).

\begin{proposition}\label{prop-no-one}
Let $\Ac$ be some connected invertible reversible non-bireversible Mealy automaton.
A 1-self-liftable path in~$\otree[A]$ cannot contain an edge labeled by~1.
\end{proposition}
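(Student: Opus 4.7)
The base case $i=0$ is immediate: the label of $e_0$ equals $|L_1|=|Q|$, and non-bireversibility precludes a singleton stateset, forcing $|Q|\geq 2$. So I assume $i\geq 1$ and argue for a contradiction. Suppose $e_i$ has label~$1$, meaning $|L_{i+1}|=|L_i|$, where $L_j$ denotes the vertex reached at level $j$ along the path. Then every $\mot{v}\in L_i$ admits a unique letter $\phi(\mot{v})\in Q$ such that $\mot{v}\phi(\mot{v})\in L_{i+1}$, so $L_{i+1}=\{\mot{v}\phi(\mot{v}):\mot{v}\in L_i\}$. Writing $\delta_s(\mot{v}\phi(\mot{v}))$ in two ways — as $\delta_s(\mot{v})\delta_{\rho_\mot{v}(s)}(\phi(\mot{v}))$ on one hand, and as $\delta_s(\mot{v})\phi(\delta_s(\mot{v}))$ on the other (because this element of $L_{i+1}$ has prefix $\delta_s(\mot{v})\in L_i$) — yields the twisted equivariance $\phi(\delta_s(\mot{v}))=\delta_{\rho_\mot{v}(s)}(\phi(\mot{v}))$.

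I next use the 1-self-liftability of $e_i$ with respect to $e_{i-1}$, together with Lemma~\ref{lem-lift-one}: every word of $L_{i+1}$ has its suffix of length $i$ in $L_i$, so the map $\tau:L_i\to L_i$, $\mot{v}=v_1\cdots v_i\mapsto v_2\cdots v_i\,\phi(\mot{v})$, is well-defined. The key step is then to show $\tau$ is a bijection. Unrolling the action of $\delta_\mot{s}$ on $L_{i+1}$ via the twisted equivariance delivers the intertwining $\tau(\delta_\mot{s}(\mot{v}))=\delta_{\rho_{v_1}(\mot{s})}(\tau(\mot{v}))$. Since $\rho_{v_1}$ is a bijection of $\Sigma^*$ and $L_i$ is strongly connected (reversibility), the image of $\tau$ contains the full $\delta$-orbit of $\tau(\mot{v}_0)$ for any $\mot{v}_0\in L_i$, which is all of $L_i$. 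Hence $\tau$ is surjective, and therefore bijective on the finite set $L_i$.

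The contradiction comes from non-coreversibility of $\Ac$: there exist $x\neq y$ in $Q$, inputs $j,j'$, and an output $k$ with $\delta_j(x)=\delta_{j'}(y)$ and $\rho_x(j)=\rho_y(j')=k$. By Lemma~\ref{lem-product}, $L_i$ contains some state $x\mot{v}''$ with $\mot{v}''\in Q^{i-1}$, and the explicit argument in the proof of Proposition~\ref{prop-product} shows that $y\mot{v}''$ also lies in $L_i$, the two sharing a common successor reached with output~$k$. Applying the equivariance of $\phi$ to this pair forces $\phi(x\mot{v}'')=\phi(y\mot{v}'')$, hence $\tau(x\mot{v}'')=\mot{v}''\phi(x\mot{v}'')=\mot{v}''\phi(y\mot{v}'')=\tau(y\mot{v}'')$ although $x\mot{v}''\neq y\mot{v}''$, contradicting the injectivity of $\tau$. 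The main obstacle is the bijectivity of $\tau$: it is where the twisted nature of the equivariance must be harnessed, via the intertwining $\tau\circ\delta_\mot{s}=\delta_{\rho_{v_1}(\mot{s})}\circ\tau$, before strong connectivity of $L_i$ can be exploited.
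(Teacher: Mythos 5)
Your argument is correct, but it is organized rather differently from the paper's proof, so it is worth comparing the two. You assume the label of~$e_i$ is~$1$, turn the resulting bijectivity of the prefix map $L_{i+1}\to L_i$ into a unique-extension function~$\phi$, and combine it with the liftability of~$e_i$ to~$e_{i-1}$ to build the shift-and-extend self-map $\tau\colon L_i\to L_i$; the heart of your proof is the intertwining $\tau\circ\delta_{\mot{s}}=\delta_{\rho_{v_1}(\mot{s})}\circ\tau$, which together with invertibility (to pull the input word back through $\rho_{v_1}^{-1}$) and strong connectedness of~$L_i$ gives surjectivity, hence bijectivity, of~$\tau$; the non-coreversible configuration then produces two distinct states $x\mot{v}''\neq y\mot{v}''$ of~$L_i$ (both lie in~$L_i$ since they feed into a common state of~$L_i$ with the same output, exactly as in the proof of Proposition~\ref{prop-product}) that $\tau$ identifies, using injectivity of the~$\delta_m$. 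The paper instead argues directly, without the label-$1$ hypothesis: writing $T=\top(e)$ and $L=\bot(e)$, it decomposes $L=\bigsqcup_{x\in Q}x\resid{x}{L}$ by first letters, shows $T=\bigcup_{x\in Q}\resid{x}{L}$ using liftability, reversibility and invertibility (the same pull-back of input words through~$\rho_x^{-1}$ that powers your surjectivity step), and then uses the non-coreversible configuration plus injectivity of~$\delta_j$ to get $\resid{y}{L}=\resid{y'}{L}$ for distinct $y,y'$, whence $\card{T}<\card{L}$ and the label exceeds~$1$. So the ingredients are identical (Lemma~\ref{lem-product}, the forbidden configuration, injectivity of the transition maps, invertibility to invert outputs), but the paper's cardinality comparison via left quotients is leaner---no case distinction at the root, no~$\phi$, no intertwining relation---while your version makes the combinatorial mechanism behind the label-$1$ obstruction (a $\delta$-equivariant unique-extension map that non-coreversibility forces to collide) more explicit; both are valid proofs of the proposition.
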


\newcommand{\LLL}{T}%{L}
\newcommand{\LLLp}{L}%{L'}

\begin{proof}
Let~$\Ac= (Q,\Sigma, \delta, \rho)$, $\mot{e}$ be a 1-self-liftable path in~$\otree[A]$ and $e$ be an edge of~$\mot{e}$.
Let $\LLL$ (\emph{resp.} $\LLLp$) denote the set of states of~$\top(e)$ (\emph{resp.} of~$\bot(e)$).
For any word~\(\mot{w}\), \(\resid{\mot{w}}{L}=\{\mot{u}\mid \mot{wu}\in L\}\)
is the \emph{left quotient} of~\(L\) by~\(\mot{w}\) (see for instance~\cite{saka}).

As $\Ac$ is connected and reversible, according to Lemma~\ref{lem-product}, 
for any $x \in Q$,
the left quotient~$\resid{x}{\LLLp}$ is non-empty. 
Hence we have\[\LLLp= \bigsqcup_{x \in Q} x\resid{x}{\LLLp}\quad\hbox{(disjoint union)}.\]
The hypotheses that the path~$\mot{e}$ is 1-self-liftable
and that~$\aut{A}$ is invertible yield~\[\LLL= \bigcup_{x \in Q} \resid{x}{\LLLp}.\]
Indeed, let~$x\mot{u}\in\LLLp$ with~$x\in Q$ 
(and~$\mot{u}\in\LLL$ by 1-self-liftability) and let~$\mot{v}\in\LLL$.
By reversibility, there exists~$\mot{s}\in\Sigma^*$ verifying~$\delta_{\mot{s}}(\mot{u})=\mot{v}$.
Now, by invertibility, there exists~$\mot{t}\in\Sigma^*$ with~$\rho_{x}(\mot{t})=\mot{s}$:
\[\begin{array}{ccl}
		& {\color{gray}\mot{t}}  \\
x		& {\color{gray}\lacroix} 	& {\color{gray}\delta_{\mot{t}}(x)}=x'\\
		& \mot{s}\\
\mot{u}	& \lacroix				& \delta_{\mot{s}}(\mot{u})=\mot{v}\\
		& \rho_{\mot{u}}(\mot{s}).	
\end{array}\]Therefore, $\mot{v}$ is a suffix of~$\delta_{\mot{t}}(x\mot{u})$, hence~$\mot{v}\in L_{x'}$ for~$x'=\delta_{\mot{t}}(x)\in Q$.

Since $\Ac$ is not coreversible, there exist $y \neq y',z \in Q$ and $i,j,k \in \Sigma$ satisfying
\[\begin{array}{cccm{1cm}ccc}
	& i      	&  	&			&   	& k \\
y	& \lacroix 	& z 	& \text{ and }  	& y'	& \lacroix & z\\
	& j 		& 	& 			&	& j
\end{array}.\]
So, in the connected component $\bot(e)$, we have
\[\delta_i(y\resid{y}{\LLLp})=z\resid{z}{\LLLp}\quad\hbox{and}\quad\delta_k(y'\resid{y'}{\LLLp})=z\resid{z}{\LLLp}.\]
From reversibility of~$\Ac$, $\delta_j$ is injective and we deduce~$\resid{y}{\LLLp}=\resid{y'}{\LLLp}$. 
Therefore the union~$\LLL=\bigcup_{x \in Q} \resid{x}{\LLLp}$ is not disjoint
and we find~$\card{\LLL} < \card{\LLLp}$ which implies that the label of~$e$ is greater than~1.
\qed\end{proof}

An easy but interesting first consequence is the following.

\begin{corollary}\label{cor-conn3free}
A connected 3-state invertible reversible non-bireversible Mealy automaton generates a free semigroup.
\end{corollary}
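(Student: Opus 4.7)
The plan is to first show that $\otree[\Ac]$ reduces to a single chain (with each $\Ac^n$ connected of size $3^n$), and then to use this rigid structure together with non-bireversibility to rule out every nontrivial semigroup relation.

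For the chain structure, I would proceed by induction on the level $n$. The cases $n \leq 1$ are immediate from the connectedness hypothesis. Assuming $Q^n$ is the unique level-$n$ component, every word in any level-$(n{+}1)$ component~$\cc{C}$ has its length-$n$ suffix in $Q^n$, so the initial path terminating at $\cc{C}$ is automatically 1-self-liftable. Proposition~\ref{prop-no-one} then forces the terminal edge label to be at least~$2$, whence $|\cc{C}| \geq 2 \cdot 3^n$. The divisibility constraint (children sizes are integer multiples of $3^n$) and the sum constraint (they total $3^{n+1}$) leave only one possibility: a single level-$(n{+}1)$ component of size $3^{n+1}$.

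With the chain structure established, Corollary~\ref{cor-sph-tr} gives that $\presm{\Ac}$ is torsion-free. Suppose for contradiction that $\rho_{\mot{u}} = \rho_{\mot{v}}$ for distinct $\mot{u}, \mot{v} \in Q^+$. Any cross-length relation reduces to a length-preserving one via $\rho_{\mot{u}\mot{v}} = \rho_{\mot{u}}^{2} = \rho_{\mot{v}\mot{u}}$: torsion-freeness forbids $\mot{u}, \mot{v}$ from being powers of a common word, and so $\mot{u}\mot{v} \neq \mot{v}\mot{u}$ as words. Assume then $|\mot{u}| = |\mot{v}| = n$. The Myhill--Nerode equivalence on $Q^n$, preserved by the transitive action of $\pres{\dz(\Ac)}$, has classes of uniform cardinality $k$ with $k \mid 3^n$ and $k \geq 2$, so $k \geq 3$. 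The minimization $\aut{M}$ of $\Ac^n$ is then a connected invertible reversible Mealy automaton with $3^n/k$ states; it remains non-coreversible because any witness $y \neq y'$ with $y \equiv y'$ of a collapsing Configuration~{\bf\small(c)} would, by invertibility of the common function $\rho_y = \rho_{y'}$, force the two input letters to agree, and then by reversibility $y = y'$, a contradiction.

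The main obstacle is turning the existence of this smaller connected invertible reversible non-bireversible Mealy automaton~$\aut{M}$ into an actual contradiction. I would reapply the chain analysis to~$\aut{M}$ and compare the growth of $\presm{\aut{M}} = \presm{\Ac^n}$ against the one prescribed by $\otree[\Ac]$ at levels divisible by~$n$; the delicate point is tracking the Myhill--Nerode class sizes at all such levels and forcing them to be inconsistent with both chain structures simultaneously, perhaps by iterating the minimization procedure to produce an infinite descending sequence of state counts.
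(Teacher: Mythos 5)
Your first half is sound and is essentially the argument the paper leaves implicit: once you know (inductively) that level~$n$ of~$\otree[\Ac]$ is the single component~$Q^n$, every initial path to a level-$(n{+}1)$ component is trivially 1-self-liftable, Proposition~\ref{prop-no-one} forces each such component to have size at least~$2\cdot 3^n$, and the counting argument with multiples of~$3^n$ summing to~$3^{n+1}$ leaves a single component of size~$3^{n+1}$; this is exactly how one deduces that the dual of~$\Ac$ acts spherically transitively, and your auxiliary claims afterwards (reduction of unequal-length relations to equal-length ones using the fact that no~$\rho_{\mot{u}}$ has finite order, uniformity of the Nerode class sizes under the transitive action of~$\pres{\dz(\Ac)}$, and the minimized automaton~$\aut{M}$ of~$\Ac^n$ being connected, invertible, reversible and non-coreversible) are all correct.

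The genuine gap is that you never derive the contradiction, i.e.\ you never prove freeness. The existence of a connected invertible reversible non-coreversible automaton~$\aut{M}$ with fewer than~$3^n$ states is not in itself contradictory --- the lamplighter automaton~$\aut{L}$ is exactly such an object --- so the contradiction must come from a finer interaction between~$\aut{M}$ and the fact that all powers of~$\Ac$ are connected, and your closing paragraph (``reapply the chain analysis\ldots\ perhaps by iterating the minimization procedure'') is a plan rather than an argument: the proposed descent on state counts and comparison of growth rates is not carried out, and it is not clear it can be made to work as stated, since the Nerode class sizes~$k_\ell$ at levels~$n\ell$ are only constrained to grow at least like~$k^\ell$, which is perfectly consistent with the chain structure. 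The paper closes precisely this hole by invoking an external result, \cite[Proposition~14]{Kli13}, which says that spherical transitivity of the dual already forces the generated semigroup to be free; that statement is the heart of the corollary and its proof is a substantial piece of work, not something recovered by the minimization set-up you describe. So the missing ingredient is exactly the implication ``all powers of~$\Ac$ connected $\Rightarrow$ $\presm{\Ac}$ free'', which your attempt assumes it can reconstruct but does not.
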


\begin{proof} We deduce from Proposition~\ref{prop-no-one} that any connected 3-state invertible reversible
non-bireversible Mealy automaton sees its dual to be spherically transitive
and the result follows from~\cite[Proposition~14]{Kli13}.
\qed\end{proof}

\medbreak
Let us go back to our main purpose.

\begin{proposition}\label{prop-order-gene}
Let $\Ac$ be some connected invertible reversible non-bireversible Mealy automaton.
Then any state of~$\Ac$ induces an action of infinite order.
\end{proposition}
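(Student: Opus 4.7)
The plan is to combine the three earlier results about 1-self-liftable paths with the characterization of finite-order elements in terms of sizes of connected components of powers. Let $x$ be any state of $\Ac$ and consider the path of $x^{\omega}$ in the labeled orbit tree $\otree[A]$, that is, the unique initial path whose $n$-th vertex is the connected component~$\cc{C}_n$ of~$\Ac^n$ containing~$x^n$.

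By Lemma~\ref{lem-omega}, this path is 1-self-liftable, because $x^n$ is a suffix of $x^{n+1}$ for every~$n$. Since $\Ac$ is connected, invertible, reversible and non-bireversible, Proposition~\ref{prop-no-one} applies and tells us that every edge of the path of~$x^{\omega}$ has label at least~$2$. By the very definition of the edge labels in~$\otree[A]$, this means $|\cc{C}_{n+1}| \geq 2\,|\cc{C}_n|$ for every~$n\geq 0$, so $|\cc{C}_n| \geq 2^n$, and in particular the sequence of sizes of the connected components of $(x^n)_{n\in\N}$ is unbounded.

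Now applying Proposition~\ref{prop-order} to the one-letter word~$\mot{u}=x$, the equivalence \eref{i1}$\Leftrightarrow$\eref{i2} shows that $\rho_x$ must have infinite order, which is exactly what we want.

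The only non-routine step is invoking Proposition~\ref{prop-no-one}, which has already been proved; the rest is a direct combination of earlier lemmas. The main conceptual point here—and the reason the proof is short—is that the labeled orbit tree was designed precisely to translate the finite/infinite order question into a combinatorial statement about edge labels along specific paths, and Lemma~\ref{lem-omega} guarantees that the path of~$x^{\omega}$ is always eligible for Proposition~\ref{prop-no-one}.
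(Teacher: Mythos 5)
Your proof is correct and follows exactly the paper's argument: Lemma~\ref{lem-omega} gives 1-self-liftability of the path of~$x^\omega$, Proposition~\ref{prop-no-one} excludes labels equal to~$1$ so the component sizes along this path are unbounded, and Proposition~\ref{prop-order} converts this into infinite order of~$\rho_x$. The explicit bound $\card{\cc{C}_n}\geq 2^n$ is a harmless extra detail not needed beyond unboundedness.
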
 

\begin{proof}
Let~$x$ be a state of~$\aut{A}$. The path of~$x^\omega$ is 1-self-liftable by Lemma~\ref{lem-omega}.
So by Proposition~\ref{prop-no-one}, this path has no edge labeled with~$1$, which means that
the sizes of the connected components of~$(x^n)_{n \in \N}$ are unbounded.
By Proposition~\ref{prop-order}, the action induced by~$x$ has infinite order.
\qed\end{proof}

We can now state our main result by extending Proposition~\ref{prop-order-gene}:
\begin{theorem}\label{thm-main}
Any invertible reversible Mealy automaton without bireversible component
generates a torsion-free semigroup.
\end{theorem}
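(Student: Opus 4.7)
The plan is to reduce the order problem for an arbitrary element $\rho_{\mot{u}} \in \presm{\aut{A}}$ (with $\mot{u} \in Q^+$) to the single-state connected case already handled by Proposition~\ref{prop-order-gene}. The key observation is that a word $\mot{u} \in Q^n$ is a single state of the $n$-th power $\aut{A}^n$, so examining $\mot{u}$ inside the connected component it lies in, viewed as a connected sub-automaton of $\aut{A}^n$, should let us invoke Proposition~\ref{prop-order-gene} directly on that component.

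Concretely, I would fix $\mot{u} \in Q^+$ with $n = |\mot{u}|$ and let $\cc{C}$ denote the connected component of the state $\mot{u}$ in $\aut{A}^n$. By hypothesis $\aut{A}$ is invertible and reversible and none of its connected components is bireversible, which---since being bireversible is equivalent to being coreversible for an invertible reversible automaton---means that no connected component of $\aut{A}$ is coreversible. Corollary~\ref{cor-product} then guarantees that $\cc{C}$ is itself invertible, reversible, and non-coreversible, hence non-bireversible; being connected by construction, $\cc{C}$ satisfies every hypothesis of Proposition~\ref{prop-order-gene}, so the state $\mot{u}$ of $\cc{C}$ induces an action of infinite order on $\Sigma^*$. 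It remains to identify this action with $\rho_{\mot{u}}$ viewed as an element of $\presm{\aut{A}}$: this is immediate from the definition of the product of Mealy automata, which yields $\rho_{u_1\cdots u_n} = \rho_{u_n}\circ\cdots\circ\rho_{u_1}$ as the map associated with the state $u_1\cdots u_n$ of $\aut{A}^n$, and the sequence of states $\delta_{\mot{s}}(\mot{u})$ reached from $\mot{u}$ under any input $\mot{s}\in\Sigma^*$ stays inside $\cc{C}$ by definition of a connected component. Hence $\rho_{\mot{u}}$ has infinite order in $\presm{\aut{A}}$, and the semigroup is torsion-free.

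The main obstacle---modest, given the machinery already developed---is to check that the chain of reductions preserves exactly the hypotheses needed: Corollary~\ref{cor-product} must be applicable without any global connectedness assumption on $\aut{A}$ (it is, since only the absence of coreversible connected components is required), and the order of the state $\mot{u}$ in the sub-automaton $\cc{C}$ must coincide with the order of $\rho_{\mot{u}}$ in $\presm{\aut{A}}$ (which holds because $\cc{C}$ is $\delta$-closed and contains $\mot{u}$, so the production function computed within $\cc{C}$ is the restriction of the one computed within $\aut{A}^n$, which is in turn equal to $\rho_{\mot{u}}$). Once both points are in place, the theorem follows at once from Proposition~\ref{prop-order-gene}.
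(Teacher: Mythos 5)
Your proof is correct and follows essentially the same route as the paper's: take the connected component $\cc{C}$ of $\mot{u}$ in $\aut{A}^{|\mot{u}|}$, use Corollary~\ref{cor-product} (via the equivalence of bireversibility and coreversibility for invertible reversible automata) to see that $\cc{C}$ is connected, invertible, reversible and non-bireversible, and then apply Proposition~\ref{prop-order-gene} to the state $\mot{u}$. The extra checks you make (that the action of $\mot{u}$ as a state of $\cc{C}$ is exactly $\rho_{\mot{u}}$) are details the paper leaves implicit, and they are handled correctly.
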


\begin{proof}
Let~\(\aut{A}\) be an invertible reversible Mealy automaton without bireversible component with stateset~$Q$.
Let $\mot{u} \in Q^+$ and let~$\cc{C} $ its connected component in~$\otree[\Ac]$.
From Corollary~\ref{cor-product}, $\cc{C} $ is a connected invertible reversible non-bireversible Mealy automaton
with~$\mot{u}$ as a state.
Hence by Proposition~\ref{prop-order-gene}, $\mot{u}$ induces an action of  infinite order.
\qed\end{proof}

Note that Theorem~\ref{thm-main} cannot provide extra information on the torsion-freeness of the generated group.
Take for instance the Mealy automaton~$\aut{L}$ of Figure~\ref{fig-lamplighter} (left): the action induced by~$yx^{-1}$ has order~2.
However, Theorem~\ref{thm-main} ensures that an invertible reversible Mealy automaton without bireversible component
cannot generate an infinite Burnside group (see~\cite{nek} for background on the Burnside problem).

\bigskip
All these results and constructions emphasize the relevance of the reversibility property
and question us further on those (semi)groups structures generated by bireversible automata that,
despite the tightness of the hypothesis on them,
reveal more complex to study.

\section*{Acknowledgments}
The authors thank an anonymous referee, whose relevant comments improved the paper.

\bibliographystyle{splncs03}

\bibliography{inv-rev-nonbirev}
\end{document}